\newtheorem{Theorem}{Theorem}[section]
\newtheorem{Assumption}[Theorem]{Assumption}
\newtheorem{Criterion}[Theorem]{Criterion}
\newtheorem{Lemma}[Theorem]{Lemma}
\numberwithin{equation}{section}
\newcommand{\h}{\hspace*{.24in}}
\title{Online Bayesian phylogenetic inference: \\ theoretical foundations via Sequential Monte Carlo}
\author{Vu Dinh$^1$ \quad
Aaron E. Darling$^2$ \quad
Frederick A. Matsen IV$^1$ \\\\
$^1$Program in Computational Biology\\ Fred Hutchinson Cancer Research Center\\ Seattle, WA, USA\\\\
$^2$The ithree institute\\ University of Technology Sydney\\ Ultimo NSW, Australia \\
}
\date{}
\begin{document}
\maketitle


\begin{abstract}
Phylogenetics, the inference of evolutionary trees from molecular sequence data such as DNA, is an enterprise that yields valuable evolutionary understanding of many biological systems.
Bayesian phylogenetic algorithms, which approximate a posterior distribution on trees, have become a popular if computationally expensive means of doing phylogenetics.
Modern data collection technologies are quickly adding new sequences to already substantial databases.
With all current techniques for Bayesian phylogenetics, computation must start anew each time a sequence becomes available, making it costly to maintain an up-to-date estimate of a phylogenetic posterior.
These considerations highlight the need for an \emph{online} Bayesian phylogenetic method which can update an existing posterior with new sequences.

Here we provide theoretical results on the consistency and stability of methods for online Bayesian phylogenetic inference based on Sequential Monte Carlo (SMC) and Markov chain Monte Carlo (MCMC).
We first show a consistency result, demonstrating that the method samples from the correct distribution in the limit of a large number of particles.
Next we derive the first reported set of bounds on how phylogenetic likelihood surfaces change when new sequences are added.
These bounds enable us to characterize the theoretical performance of sampling algorithms by bounding the effective sample size (ESS) with a given number of particles from below.
We show that the ESS is guaranteed to grow linearly as the number of particles in an SMC sampler grows.
Surprisingly, this result holds even though the dimensions of the phylogenetic model grow with each new added sequence.

\bigskip

{\bf MSC 2010 subject classifications:} Primary 05C05, 60J22; secondary 92D15, 92B10.

{\bf Keywords}: phylogenetics, Sequential Monte Carlo, effective sample size, online inference, Bayesian inference, subtree optimality

{\bf Funding}: VD and FAM funded by National Science Foundation grants DMS-1223057 and CISE-1564137.
FAM supported by a Faculty Scholar grant from the Howard Hughes Medical Institute and the Simons Foundation.

\end{abstract}

\thispagestyle{empty}
\clearpage
\setcounter{page}{1}

\section{Background and main results}

Phylogenetics is the theory and practice of reconstructing evolutionary trees.
Evolutionary trees have found wide application in biology and medicine, including use in epidemiology, conservation planning, and cancer genomics.
Maximum likelihood and Bayesian methods are generally considered to be the most powerful and accurate approaches for phylogenetic inference.
The Bayesian methods in particular enjoy the flexibility to incorporate a wide range of ancillary model features such as geographical information or trait data which are essential for some applications.
However, Bayesian tree inference with current implementations is a computationally intensive task, often requiring days or weeks of CPU time to analyze modest datasets with 100 or so sequences.

New developments in DNA and RNA sequencing technology have led to sustained growth in sequence datasets.
This advanced technology has enabled real time outbreak surveillance efforts, such as ongoing Zika, Ebola, and foodborne disease sequencing projects, which make pathogen sequence data available as an epidemic unfolds \citep{Gardy2015-rb,Quick2016-fz}.
In general these new pathogen sequences arrive one at a time (or in small batches) into a background of existing sequences.
Most phylogenetic inferences, however, are performed ``from scratch'' even when an inference has already been made on the previously available sequences.
Thus projects such as \texttt{nextflu.org} \citep{Neher2015-jr} incorporate new sequences into trees as they become available, but do so by recalculating the phylogeny from scratch at each update using a fast approximation to maximum likelihood inference, rather than a Bayesian method.

Modern researchers using phylogenetics are in the situation of having previous inferences, having new sequences, and yet having no principled method to incorporate those new sequences into existing inferences.
Existing methods either treat a previous point estimate as an established fact and directly insert a new sequence into a phylogeny \citep{Matsen2010-ze,Berger2011-zy}, or use such a tree as a starting point for a new maximum-likelihood search \citep{Izquierdo-Carrasco2014-hu}.
There is currently no method to update posterior distributions on phylogenetic trees with additional sequences.

\begin{figure}
\includegraphics[width=0.95\textwidth]{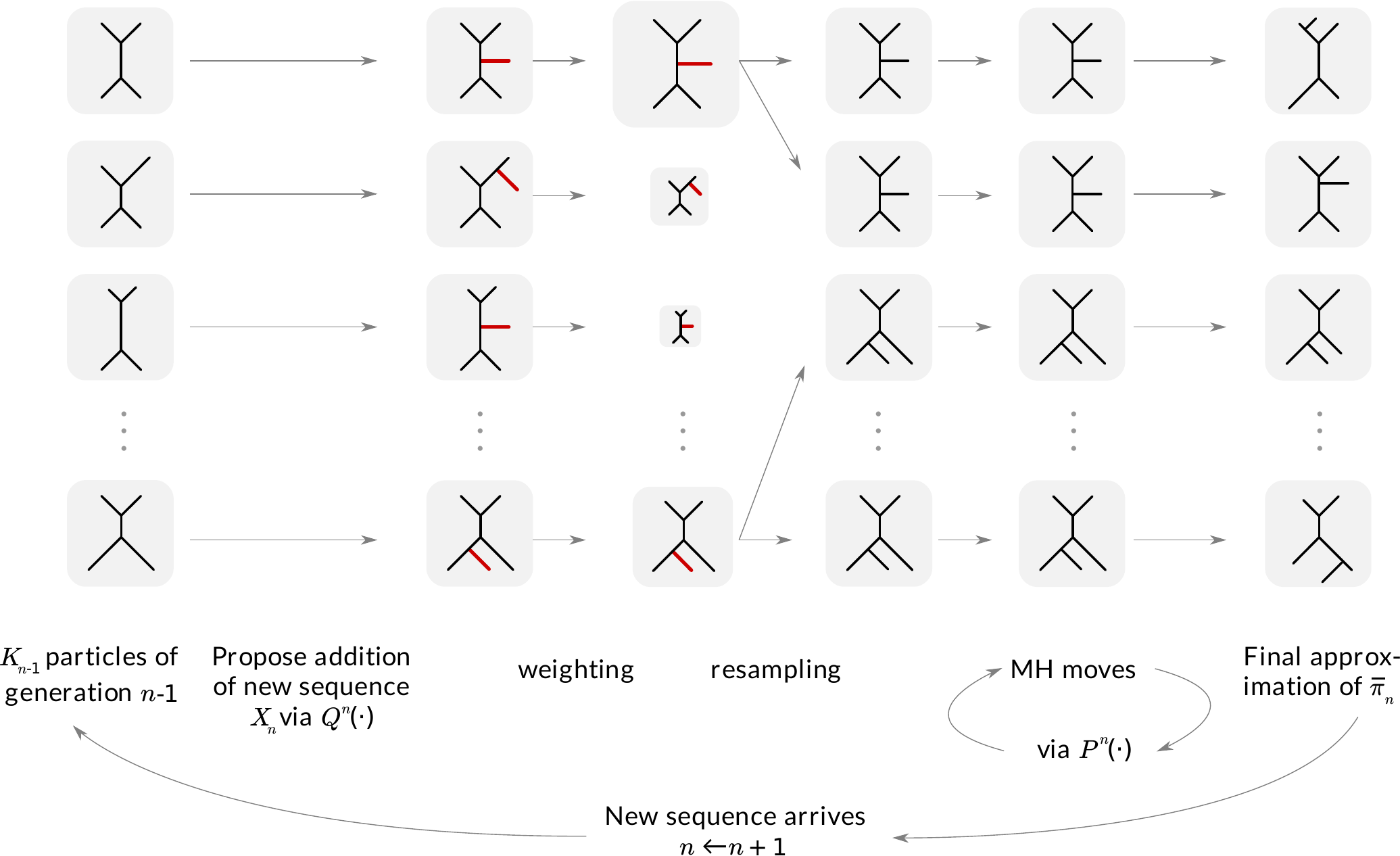}
\caption{An overview of the Online Phylogenetic Sequential Monte Carlo algorithm.}
\label{fig:overview}
\end{figure}

In this paper we develop the theoretical foundations for an online Bayesian method for phylogenetic inference based on Sequential and Markov Chain Monte Carlo.
Unlike previous applications of Sequential Monte Carlo (SMC) to phylogenetics \citep{bouchard2012phylogenetic,Bouchard-Cote2014-fw,Wang2015-nm}, we develop and analyze algorithms that can update a posterior distribution as new sequence data becomes available.
We first show a consistency result, demonstrating that the method samples from the correct distribution in the limit of a large number of particles in the SMC.
Next we derive the first reported set of bounds on how phylogenetic likelihood surfaces change when new sequences are added.
These bounds enable us to characterize the theoretical performance of sampling algorithms by developing a lower bound on the effective sample size (ESS) for a given number of particles.
Surprisingly, this result holds even though the dimensions of the phylogenetic model grow with each new added sequence.

\section{Mathematical setting}

\subsection{Background and notation}
Throughout this paper, a \emph{phylogenetic tree} $(\tau, l)$ is an unrooted tree $\tau$ with leaves labeled by a set of taxon names (e.g.\ species names), such that each edge $e$ is associated with a non-negative number $l_e$.
For each phylogenetic tree $(\tau,l)$, we will refer to $\tau$ as its \emph{tree topology} and to $l$ as the vector of \emph{branch lengths}.
We denote by $E(\tau)$ the set of all edges in trees with topology $\tau$; any edge adjacent to a leaf is called a \emph{pendant edge}, and any other edge is called an \emph{internal edge}.

We will employ the standard likelihood-based framework for statistical phylogenetics on discrete characters under the common assumption that alignment sites are IID \citep{felsenstein2004inferring}, which we now review briefly.
Let $\Omega$ denote the set of character states and let $r= |\Omega|$.
For DNA $\Omega=\{A,C,G,T\}$ and $r=4$.
We assume that the mutation events occur according to a continuous time Markov chain on states $\Omega$ with instantaneous rate matrix $\Xi$ and stationary distribution $\omega$.
This rate matrix $\Xi$ and the branch length $l_e$ on the edge $e$ define the transition matrix $G^e=e^{\Xi l_e}$ on edge $e$, where $G^e_{ij}(l_e)$ denotes the probability of mutating from state $i$ to state $j$ across the edge $e$ (with length $l_e$).

In an online setting, the taxa $\{X_1,X_2,\ldots,X_N\}$ and their corresponding observed sequences $\{\psi_1, \psi_2,\ldots,\psi_N\}$, each of length $S$, arrive in a specific order, where $N$ is a finite but large number.
For all $n \le N$, we consider the set of all phylogenetic trees that have $\{X_1,X_2,\ldots,X_n\}$ as their set of taxa and seek to sample from a sequence of probability distributions $\bar \pi_n$ of increasing dimension corresponding to phylogenetic likelihood functions \citep{felsenstein2004inferring}.

For a fixed phylogenetic tree $(\tau,l)$, the phylogenetic likelihood is defined as follows and will be denoted by $L(\tau, l)$.
Given the set of observations $\psi(n) = (\psi_1, \psi_2,\ldots,\psi_S) \in \Omega^{n \times S}$ of length $S$ up to time $n$, the likelihood of observing $\psi(n)$ given the tree has the form
\[
L_n(\tau, l) = \prod_{u=1}^S{\sum_{a^u}{\omega(a^u_{\rho})\prod_{(i,j)\in E(\tau)}{G^{ij}_{a^u_ia^u_j}( l_{(i,j)})}}}
\]
where $a^u$ ranges over all extensions of $\psi$ to the internal nodes of the tree, $a^u_i$ denotes the assigned state of node $i$ by $a^u$, $\rho$ denotes the root of the tree.
Although we designate a root for notational convenience, the methods and results we discuss apply equally to unrooted trees.

Given a proper prior distribution with density $\pi_0$ imposed on branch lengths and tree topologies, the target posterior distributions can be computed as $\bar \pi_n(\tau, l) \sim L_n(\tau, l) \pi_0(\tau, l)$.
We will also denote by $\hat \pi_n(\tau, l)$ the un-normalized measure $ L_n(\tau, l) \pi_0(\tau, l).$

Throughout the paper,  we assume that the phylogenetic trees of interest all have non-negative branch lengths bounded from above by $b>0$ and denote by $\mathcal{T}_n$ the set of all such trees.
To enable integration on tree spaces and define $\bar \pi_n$, we consider the natural probability measure on $\mathcal{T}_n$: the set $\mathcal{T}_n$ is viewed as the product space of the space of all possible tree topologies (with uniform measure) and the space of all branch lengths $[0,b]^{2n-3}$ (with Lebesgue measure).
These measures can be written as
\[
d \mu_n(\tau,l)= \frac{1}{V_n} d\nu_n(l)= \left(\frac{1}{V_n} d\tau \right)\left(\prod_{e \in E({\tau})}{d\nu(l_e)}\right),
\]
where $V_n = (2n-3)!!$ is the number of different topologies of $\mathcal{T}_n$, $l_e$ is the length of edge $e$, $d\tau$ is the counting measure on the set of all topologies on $\mathcal{T}_n$,  and $d\nu$ is the Lebesgue measure on $\mathbb{R}^+$.

\subsection{Sequential Monte Carlo}

SMC methods are designed to approximate a sequence of probability distributions changing through time.
These probability distributions may be of increasing dimension or complexity.
They track the sequence of probability distributions of interest by producing a discrete representation of the distribution $\bar \pi_n$ at each iteration $n$ through a random collection of weighted particles.
After each generation, new sequences arrive and the collection of particles is updated to represent the next target distribution.
While the details of the algorithms might vary, the main idea of SMC interspersed with MCMC sampling can be described as follows.

At the beginning of each iteration $n$, a list of $K_n$ particles $p^{n}_{1},\ldots,p^{n}_{K_n}$ are maintained along with a positive weight $w^n_i$ associated with each particle $p^{n}_{i}$.
These weighted particles form an un-normalized measure and a corresponding normalized empirical measure
\[
\hat \pi_{n, K_n} =  \sum_{i=1}^{K_{n+1}}{ w^{n}_i\delta_{p^{n}_i}(\cdot)}
\text{ \ and \ }
\bar \pi_{n, K_n} = K_n^{-1} \hat \pi_{n, K_n}
\]
such that $\bar \pi_{n, K_n}$ approximates $\bar \pi_n$.
A new list of $K_{n+1}$ particles is then created in three steps: selection, Markov transition and mutation.

The aim of the selection step is to obtain an unweighted empirical distribution of the weighted measure $\hat{\pi}_{n, K_n}$ by discarding samples with small weights and allowing samples with large weights to reproduce.
Formally, after selection we obtain the unweighted measure
\[
\hat{\alpha}_{n,K_{n+1}}= \sum_{i=1}^{K_n}{ K_{n+1, i} \, \delta_{p^{n}_i}(\cdot)}
\]
where $K_{n+1,i}$ is the multiplicity of particle $p^{n}_i$, sampled from a multinomial distribution parameterized by the weights $w^{n}_i$.
We denote the particles obtained after this step by $s^{n}_{i}$.

The scheme employed in the selection step introduces some Monte Carlo error.
Moreover, when the distribution of the weights from the previous generation is skewed, the particles having high importance weights might be over-sampled.
This results in a depletion of samples (or \emph{path degeneracy}): after some generations, numerous particles are in fact sharing the same ancestor.
A Markov transition step can be employed to alleviate this sampling bias, during which MCMC steps are run separately on each particle $s^{n}_{i}$ for a certain amount of time to obtain a new independent sample $m^{n}_{i}$ with (unweighted) measure denoted $\hat \beta_{n,K_{n+1}}$.

Finally, in the mutation step, new particles $t^{n+1}_{1},\ldots,t^{n+1}_{K_{n+1}}$ are created from a proposal distribution $Q^n$ and are weighted by an appropriate weight function $h$.
If we assume further that for each state $t$, there exists a unique state $s$, denoted by $\varrho(t)$, such that $Q^n(\varrho(t),t)>0$, then $h$ can be chosen as
\begin{equation}
h(t) = \frac{{\hat\pi}_{n+1}(t)}{{\hat\pi}_n(\varrho(t)) ~ Q^n(\varrho(t),t)}.
\label{eq1}
\end{equation}
The process is then iterated until $n=N$.

For convenience, we will denote the unnormalized empirical measures of the particles right after step $n$ by $\hat \alpha_{n,K_{n+1}}$, $\hat \beta_{n,K_{n+1}}$ and $\hat \lambda_{n,K_{n+1}}$, respectively.
%
Similarly, the corresponding normalized distributions will be denoted by $\bar \alpha_{n, K_{n+1}}$, $\bar \beta_{n, K_{n+1}}$ and $\bar \lambda_{n, K_{n+1}}$.

\section{Online phylogenetic inference via Sequential Monte Carlo}

Here we develop Online Phylogenetic sequential Monte Carlo (OPSMC) methods that continually update phylogenetic posteriors as new molecular sequences are added.
In contrast to the traditional setting of SMC, for OPSMC when the number of leaves $n$ of the particles increases, not only does the local dimension of the space $\mathcal{T}_n$ increase linearly, the number of different topologies in $\mathcal{T}_n$ also increases super-exponentially in $n$.
Careful constructions of the proposal distribution $Q^n$, which will build $n+1$-taxon trees out of $n$-taxon trees, and the Markov transition kernel $P^n$ are essential to cope with this increasing complexity.

Given two trees $r$ and $r'$ in the tree space $\mathcal{T}=\bigcup{~\mathcal{T}_n}$, we say that $r'$ \emph{covers} $r$ if there exists $n$ such that $r \in \mathcal{T}_n$, $r' \in \mathcal{T}_{n+1}$, and $r$ can be obtained from $r'$ by removing the taxon $X_{n+1}$ and its corresponding edge.
This definition is analogous to the covering definition of \cite{Wang2015-nm}, although is distinct in the setting of online inference.
The proposal distributions $Q^n$ will be designed in such a way that the following criterion holds.
\begin{Criterion}
At every step of the OPSMC sampling process, the proposal density $Q^n$ satisfies $Q^n(r,r')>0$ if and only if $r'$ covers $r$.
\label{hasse}
\end{Criterion}

Under this criterion, for every tree $t \in \mathcal{T}_{n+1}$, there exists a unique tree $\varrho(t)$ in $\mathcal{T}_{n}$ such that $Q^n(\varrho(t),t)>0$ and thus a weight function of the form $\eqref{eq1}$ can be used.

To obtain an $(n+1)$-taxon tree from an $n$-taxon tree, a proposal strategy $Q^n$ must specify:
\begin{enumerate}
\item an edge $e$ to which the new pendant edge is added,
\item the position $x$ on that edge to attach the new pendant edge, and
\item the length $y$ of the pendant edge.
\end{enumerate}
The position $x$ on an edge of a tree will be specified by its \emph{distal} length, which is the distance from the attachment location to the end of the edge that is farthest away from the root of the tree.
Different ways of choosing $(e,x,y)$ lead to different sampling strategies and performances.
Throughout the paper, we will investigate two different classes of sampling schemes: length-based proposals and likelihood-based proposals.

\subsection{Length-based proposals}
For length-based proposals:
\begin{enumerate}
\item the edge $e$ is chosen from a multinomial distribution weighted by length of the edges,
\item the distal position $x$ is selected from a distribution $P^e_X(x)$ across the edge length,
\item the pendant length $y$ is sampled from a distribution $P_Y(y)$ with support contained in $[0, b]$.
\end{enumerate}
For example, if these $P$ distributions are uniform, we obtain a uniform (with respect to Lebesgue measure) prior on attachment locations across the tree.
We assume that

\begin{Assumption}
The densities $p^e_X$ of the distal position on edge $e$ and $p_Y$ of the pendant edge lengths are absolutely continuous with respect to the Lebesgue measure on $[0,l_e]$ and $[0,b]$, respectively.
Moreover,
\[
\frac{1}{l_e^{2}}\int_{0}^{l_e}{\frac{1}{p^e_X(x)}~d\nu(x)} \ \le C \h \text{and} \h \int_0^{\infty}{ \frac{1}{p_Y(y)}~d\nu(y)} \ < \infty.
\]
where $l_e$ denotes the length of edge $e$ and $C$ is independent of $l_e$.
\label{distal}
\end{Assumption}

We note that, for any density function $\psi$ on $[0,1]$ such that $1/\psi$ is integrable, the family of proposals $\psi_{l}(x) = \frac{1}{l} \psi(\frac{x}{l})$ satisfies Assumption~$\ref{distal}$.
The densities $p^e_X$ and $p_Y$  are assumed to be absolutely continuous to ensure Criterion~$\ref{hasse}$ holds.

As we will discuss in later sections, to make sure that the proposals can capture the posterior distributions $\bar \pi_n$ efficiently, some regularity conditions on $\bar \pi_n$ are also necessary.
These conditions are formalized in terms of a lower bound on the posterior expectation of $\zeta(s)$, the average branch length of $s$ for a given tree $s \in \mathcal{T}_n$.

\begin{Assumption}[Assumption on the average branch length]
There exist positive constants $c$ (independent of $n$) such that for each $n$
\[
c\le \int_{\mathcal{T}_n}{\bar \pi_n(r)\zeta(r)~dr}
\]
where $\zeta(r)$ denotes the average of branch lengths of the tree $r$.
\label{branchlength}
\end{Assumption}

\subsection{Likelihood-based proposals}

In the likelihood-based approach, the edge $e$ (from the tree $r$) is chosen from a multinomial distribution weighted by a likelihood-based utility function $f(s, e)$.
Similarly, the distributions $P^e_X(x)$ and $P_Y(y)$ might also be guided by information about the likelihood function.
Likelihood-based proposals are capable of capturing the posterior distribution more efficiently, but with an additional cost for computing the likelihoods.

We define the average likelihood utility function
\[
\mathcal{G}_n(r,e)= \int_{x,y}{\hat \pi_{n+1}(T(r,e, x,y))~dx~dy}
\]
and use it as the prototype for likelihood-based utility functions.
The likelihood-based utility function $f(r, e)$ is assumed to satisfy the following assumption.

\begin{Assumption}
There exist $c_1, c_2>0$ such that $c_1 \mathcal{G}_n(r,e) \le f_n(r,e) \le c_2 \mathcal{G}_n(r,e)$ for all $r, e$.
\label{utility}
\end{Assumption}
The following lemma (proven in the Appendix) establishes that the maximum likelihood utility function also satisfies Assumption $\ref{utility}$.

\begin{Lemma}
Let $f_n(r,e) = b \, l_e\sup_{x,y}{~\hat \pi_{n+1}(T(r,e, x,y))}$, there exists $c_3>0$ independent of $n$ such that $\mathcal{G}_n(r,e) \le f_n(r,e) \le c_3 \mathcal{G}_n(r,e)$ for all $s, e$.
\label{maximum}
\end{Lemma}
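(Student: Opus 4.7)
The first inequality $\mathcal{G}_n(r,e)\le f_n(r,e)$ is immediate: the integrand in $\mathcal{G}_n$ is pointwise bounded by $\sup_{x,y}\hat\pi_{n+1}(T(r,e,x,y))$, and the integration domain $[0,l_e]\times[0,b]$ has area $l_e b$.

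For the reverse inequality, my plan is to show that $\hat\pi_{n+1}(T(r,e,\cdot,\cdot))$ is only mildly peaked on $[0,l_e]\times[0,b]$, with quantitative constants depending on $\Xi$, $b$, the fixed sequence length $S$, and the prior, but not on $n$, $r$, or $e$. The structural input is the Felsenstein pruning decomposition at the split edge $e$: for each site $u$ one writes
\[
L^u_{n+1}(T(r,e,x,y)) \;=\; \sum_{c\in\Omega} U^u_c(x)\,V^u_c(x)\,G_{c,\psi^u_{n+1}}(y),
\]
where $U^u_c(x),V^u_c(x)$ are the partial conditional likelihoods from the two sides of the split. The semigroup identity $G(l_e-x)G(x)=G(l_e)$ then collapses the $c$-sum into a quantity independent of $(x,y)$:
\[
\sum_{c\in\Omega} U^u_c(x)\,V^u_c(x) \;=\; L^u_n(r),
\]
the site likelihood of the unsplit tree.

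Two bounds then drop out. Stochasticity of $G$ gives $G_{c,\psi}(y)\le 1$, hence $L_{n+1}(T(r,e,x,y))\le L_n(r)$ uniformly on the box. Irreducibility of $\Xi$ together with compactness yields $g := \min\{G_{c,\psi}(y): c,\psi\in\Omega,\ y\in[b/2,b]\} > 0$, so $L_{n+1}(T(r,e,x,y))\ge g^S\,L_n(r)$ on the subrectangle $[0,l_e]\times[b/2,b]$ of area $l_e b/2$. Combining these bounds with uniform two-sided control of $\pi_0$ (which, because the other edges of $r$ are unchanged under the insertion, reduces to bounds on the three new-edge densities $p(x),p(l_e-x),p(y)$) yields $f_n(r,e)\lesssim b\,l_e\,L_n(r)$ and $\mathcal{G}_n(r,e)\gtrsim g^S\,l_e\,b\,L_n(r)$, producing $f_n\le c_3\mathcal{G}_n$ with $c_3$ proportional to $g^{-S}$ times the prior-ratio constant---independent of $n$, $r$, $e$ as required.

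The main technical nuisance is verifying uniformity of the prior factor across trees and insertion edges; this reduces to requiring the marginal density on each branch length be bounded above and below on $[0,b]$, a mild condition met by standard exponential and uniform priors. A secondary point of care is that $g$ must be taken on $y\ge b/2$ rather than on the whole box, since $G_{c,\psi}(y)\to\delta_{c\psi}$ as $y\to 0$; this is precisely what forces $c_3$ to grow with the fixed number of sites $S$, but crucially not with the number of taxa $n$.
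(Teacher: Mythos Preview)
Your argument is correct and matches the paper's proof essentially line for line: the paper invokes its Lemma~\ref{subtree} to get $m(y)^S\le \hat\pi_{n+1}(t)/\hat\pi_n(\varrho(t))\le M(y)^S$ (precisely your Felsenstein-pruning computation), then integrates over the half-box $y\in[b/2,b]$ to obtain $\mathcal{G}_n(r,e)\ge (b/2)\,l_e\,m(b/2)^S\hat\pi_n(r)$ and combines with $f_n(r,e)\le b\,l_e\,\hat\pi_n(r)$, yielding $c_3=2M(b)^S/m(b/2)^S$. Your explicit treatment of the prior ratio $\pi_0(t)/\pi_0(\varrho(t))$ is a refinement the paper omits (it tacitly identifies $\hat\pi_{n+1}/\hat\pi_n$ with $L_{n+1}/L_n$), but this does not change the structure of the proof.
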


As for the length-based proposal, we assume the following conditions on the distal position and pendant edge length proposals for the likelihood-based approach.
\begin{Assumption}
The densities $p^e_X$ and $p_Y$ are absolutely continuous with respect to the Lebesgue measure on $[0,l_e]$ and $[0,b]$, respectively.
Moreover, there exists $a_0$ independent of $n$ such that
\[
\sup_{x,y}{\frac{1}{p^e_X(x)} ~ \frac{1}{p_Y(y)}} \le a_0.
\]
\label{uniform}
\end{Assumption}

\subsection{Markov transition kernels}
Besides the SMC proposal strategy $Q^n$, it is also important to choose an appropriate Markov transition kernel $P^n$ to have an effective OPSMC algorithm.
It is worth noting that the problem of sample depletion is even more severe for OPSMC, since after each generation, the sampling space actually expands in dimensionality and complexity.
To alleviate this sampling bias, MCMC steps are run separately on each particle  for a certain amount of time to obtain new independent samples.
We require the following criterion, which is as expected for any Markov transition kernel used in standard MCMC.

\begin{Criterion}
At every step of the OPSMC sampling process, the Markov transition kernel $P^n$ has $\bar{\pi}_{n}$ as its invariant measure.
\label{kernel}
\end{Criterion}

As we will see later in the proof of consistency of OPSMC, Criterion~$\ref{kernel}$ is the only assumption to be imposed on the Markov transition kernel.
This leaves us with a great degree of freedom to improve the efficiency of the sampling algorithm without damaging its theoretical properties.
For example, this allows us to use global information provided by the population of particles, such as effective sample size \citep{beskos2014stability}, to guide the proposal, or to define a transition kernel on the whole set (or some subset) of particles \citep{andrieu2001sequential}.
In the context of phylogenetics, we can design a sampler that recognizes subtrees that have been insufficiently sampled, and samples more particles to improve the effective sample size within such regions.
Similarly, one can use samplers that rearrange the tree structure in the neighborhood of newly added pendant edges.

\section{Consistency of online phylogenetic SMC}

In this section, we establish the consistency of OPSMC in the limit of a large number of particles by induction on the number of taxa $n$; that is, for every $n<N$, assuming that $\bar\pi_{n,K_n} \to \bar\pi_{n} $, we will prove that $\bar\pi_{n+1,K_{n+1}} \to \bar\pi_{n+1}$.
We note that although the measures mentioned above are indexed by $K_n$, they implicitly depend on the number of particles from the previous generations.
Thus, the convergence should be interpreted in the sense of when the number of particles of all generations approaches infinity.

The mode of convergence used in this section is ``weak convergence", in which we say $\mu_K \to \mu$ if for every appropriate test function $\phi$ we have $\lim_{K \to \infty}{\int{\phi(t)d \mu_K(t)}} = \int{\phi(t)d \mu(t)}$.
We will use $\mu(\phi)$ to denote $\int{\phi(t)d \mu(t)}$ for any measures $\mu$ and test functions $\phi$.

For convenience, let $L$ and $K$ be the number of particles at $n^\text{th}$ and $(n+1)^\text{st}$ generation, respectively.
Recall that the normalized distributions after the substeps of OPSMC are denoted by $\bar\alpha_{n,K}$, $\bar \beta_{n,K}$ and $\bar \lambda_{n,K}$, we have the following lemma, proven in the Appendix.
\begin{Lemma}
Assume that Criteria~$\ref{hasse}$ and $\ref{kernel}$ are satisfied.
If we define
\[
\bar{\lambda}_{n}(t):=\bar{\pi}_{n}(\varrho(t))Q^n(\varrho(t),t) \h \text{and} \h
h(t) := \frac{{\hat\pi}_{n+1}(t)}{{\hat\pi}_n(\varrho(t)) ~ Q^n(\varrho(t),t)}
\]
then the following statements hold.
\begin{enumerate}
\item If $\bar\pi_{n,L}$ converges to $\bar\pi_{n} $, then $\bar \alpha_{n,K}$ converges to $\bar\pi_{n} $.
\item If $\bar \alpha_{n,K}$ converges to $\bar\pi_{n}$, then $\bar \beta_{n,K}$ converges to $\bar\pi_{n}$.
\item If $\bar \beta_{n,K}$ converges to $\bar\pi_{n}$, then $\bar\lambda_{n,K}$ converges to $\bar\lambda_{n}$.
\item $h(t)\bar\lambda_{n}(t)$ is proportional to $\bar\pi_{n+1}(t)$.
\item If $\bar\lambda_{n,K}$ converges to $\bar\lambda_{n}$, then $\bar\pi_{n+1,K}$ converges to $\bar\pi_{n+1} $.
\end{enumerate}
\label{induction}
\end{Lemma}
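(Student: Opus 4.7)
The plan is to establish each of the five statements in order, reflecting the natural pipeline of OPSMC: selection, MCMC transition, mutation, algebraic identity, and reweighting. The first three are standard SMC consistency arguments (propagation of weak convergence across resampling, invariant kernels, and proposal kernels), the fourth is a pure algebraic manipulation, and the fifth is the classical ``self-normalizing importance sampling'' convergence. The recurring pattern is: write the empirical measure evaluated on a bounded test function $\phi$ as a sample average, identify its conditional expectation as an integral against the previous-stage measure, invoke the inductive convergence hypothesis, and control the Monte Carlo fluctuations by a variance bound of order $1/K$.

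For part (1), conditional on the weighted measure $\hat\pi_{n,L}$, the post-selection particles $s_i^n$ are i.i.d.\ draws from $\bar\pi_{n,L}$ via the multinomial scheme. Hence $E[\bar\alpha_{n,K}(\phi)\mid\hat\pi_{n,L}] = \bar\pi_{n,L}(\phi)$ and the conditional variance is at most $\|\phi\|_\infty^2/K$; combined with $\bar\pi_{n,L}(\phi)\to\bar\pi_n(\phi)$ this gives weak convergence. For part (2), the MCMC substep applies $P^n$ independently to each $s_i^n$, so $E[\phi(m_i^n)\mid s_i^n] = (P^n\phi)(s_i^n)$. I would compare $\bar\beta_{n,K}(\phi)$ to $\bar\alpha_{n,K}(P^n\phi)$, bound the difference by the usual $1/\sqrt K$ Monte Carlo fluctuation, and then use the inductive convergence from part (1) together with the invariance $\bar\pi_n(P^n\phi) = \bar\pi_n(\phi)$ guaranteed by Criterion~\ref{kernel}. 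Part (3) is analogous: for a test function $\phi$ on $\mathcal{T}_{n+1}$, define $\tilde\phi(s) := \int \phi(t)\,Q^n(s,t)\,dt$ and observe $E[\phi(t_i^{n+1})\mid m_i^n] = \tilde\phi(m_i^n)$. Then $\bar\lambda_{n,K}(\phi)\approx \bar\beta_{n,K}(\tilde\phi)\to\bar\pi_n(\tilde\phi)=\int \bar\pi_n(s)\,Q^n(s,t)\,\phi(t)\,ds\,dt$, which by Criterion~\ref{hasse} (so that $s=\varrho(t)$ is forced) equals $\int \bar\pi_n(\varrho(t))\,Q^n(\varrho(t),t)\,\phi(t)\,dt = \bar\lambda_n(\phi)$.

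Part (4) is a direct calculation: substituting the definitions,
\[
h(t)\,\bar\lambda_n(t) \;=\; \frac{\hat\pi_{n+1}(t)}{\hat\pi_n(\varrho(t))\,Q^n(\varrho(t),t)}\cdot \bar\pi_n(\varrho(t))\,Q^n(\varrho(t),t) \;=\; \frac{\hat\pi_{n+1}(t)}{Z_n},
\]
where $Z_n = \hat\pi_n(\varrho(t))/\bar\pi_n(\varrho(t))$ is the normalizing constant of $\hat\pi_n$, independent of $t$. Thus $h\bar\lambda_n\propto\hat\pi_{n+1}\propto\bar\pi_{n+1}$. For part (5), the reweighting yields $\bar\pi_{n+1,K}(\phi) = \bar\lambda_{n,K}(h\phi)/\bar\lambda_{n,K}(h)$. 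Applying the assumed convergence of $\bar\lambda_{n,K}$ to both numerator and denominator (with test functions $h\phi$ and $h$) and using part (4) gives $\bar\pi_{n+1,K}(\phi)\to \bar\lambda_n(h\phi)/\bar\lambda_n(h) = \bar\pi_{n+1}(\phi)$.

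The main obstacle will be justifying the convergence in part (5), since the weight function $h$ need not be bounded, so $h\phi$ is not an admissible bounded test function out of the box. The remedy is to apply the convergence of $\bar\lambda_{n,K}$ to a richer class of test functions for which the $\bar\lambda_n$-integrals of $h$ and $h\phi$ are finite — this is precisely where the weight integrability properties of $h$ enter, and where the regularity assumptions on $Q^n$ (Assumptions~\ref{distal} and \ref{uniform}) will be used in subsequent sections to keep $h$ controlled. A secondary technical point is that the convergences at each stage are conditional on the past particle system; I would formalize this by showing that conditional expectations converge almost surely and that the remaining variance terms vanish at rate $1/K$, so the overall statement $\bar\pi_{n+1,K}\to\bar\pi_{n+1}$ follows as all generations' particle counts tend to infinity.
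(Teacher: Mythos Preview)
Your proposal is correct and follows essentially the same route as the paper's own proof: resampling handled by a law-of-large-numbers argument, the MCMC and mutation steps by pushing test functions through the kernels $P^n$ and $Q^n$ and invoking invariance and Criterion~\ref{hasse}, the algebraic identity $h\bar\lambda_n=\hat\pi_{n+1}/\|\hat\pi_n\|$, and finally the self-normalized ratio $\bar\lambda_{n,K}(h\phi)/\bar\lambda_{n,K}(h)$ for part~(5). Your remark that part~(5) requires applying the convergence of $\bar\lambda_{n,K}$ to the possibly unbounded test function $h$ is a genuine technical point; the paper itself simply asserts this convergence for ``any measurable function'' without further justification, so your plan is in fact more careful here than the published argument.
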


We note that when $n=0$, the set of all rooted trees with no taxa consists of a single tree $\rho$.
Thus, if we use this single tree as the ensemble of particles at $n=0$, then $\bar \pi_{0, K_0}$ is precisely $\bar \pi_0$.
Alternatively, we can start with $n=1$ and use some ergodic MCMC methods to create an ensemble of particles with stationary distribution $\bar \pi_1$.
In either case, an induction argument with Lemma $\ref{induction}$ gives the main theorem:

\begin{Theorem}[Consistency]
If Criteria~$\ref{hasse}$ and $\ref{kernel}$ are satisfied and the sampler starts at $n=0$ by a list consisting of a single rooted tree with no taxa, or at $n=1$ with an ensemble of particles created by an ergodic MCMC method with stationary distribution $\bar \pi_1$, then
\[
\bar \pi_{n,K_n}(\phi) \to \bar \pi_n(\phi) \h  \text{as} \h K_1, K_2, \ldots K_n \to \infty
\]
for every integrable test function $\phi: \mathcal{T}_n \to \mathbb{R}$ and $n \le N$.
\label{consistency}
\end{Theorem}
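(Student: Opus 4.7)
The plan is to prove Theorem~\ref{consistency} by induction on the number of taxa $n$, with Lemma~\ref{induction} doing essentially all of the analytic work and the theorem serving as an induction wrapper. For the base case I would argue exactly as the statement suggests: if initialization happens at $n=0$ with the single no-taxon tree $\rho$ as the unique particle, then $\bar\pi_{0,K_0}$ is the point mass at $\rho$, which coincides with $\bar\pi_0$, so convergence is trivial. If instead initialization is at $n=1$ via an ergodic MCMC chain targeting $\bar\pi_1$, then standard MCMC ergodic theorems give $\bar\pi_{1,K_1}(\phi) \to \bar\pi_1(\phi)$ for any $\bar\pi_1$-integrable $\phi$ as $K_1 \to \infty$.

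For the inductive step, I would assume $\bar\pi_{n,K_n}(\phi) \to \bar\pi_n(\phi)$ for every bounded continuous test function $\phi$ as $K_1,\ldots,K_n \to \infty$, and then chain parts (1)--(5) of Lemma~\ref{induction} in order. Setting $L = K_n$ and $K = K_{n+1}$: part (1) pushes the inductive hypothesis through the multinomial selection step to obtain $\bar\alpha_{n,K_{n+1}} \to \bar\pi_n$; part (2) propagates this through the Markov transition step, where Criterion~\ref{kernel} is used to ensure that $\bar\pi_n$ is preserved under $P^n$, yielding $\bar\beta_{n,K_{n+1}} \to \bar\pi_n$; part (3), which leverages Criterion~\ref{hasse} so that the map $r \mapsto Q^n(r,\cdot)$ is well-behaved, gives $\bar\lambda_{n,K_{n+1}} \to \bar\lambda_n$; part (4) identifies $h \cdot \bar\lambda_n$ as proportional to $\bar\pi_{n+1}$, so that reweighting by $h$ in the mutation step corresponds to the correct target measure; and part (5) combines these to conclude $\bar\pi_{n+1,K_{n+1}} \to \bar\pi_{n+1}$. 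Iterating this procedure at most $N$ times yields the theorem.

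The main obstacle, and really the only non-routine subtlety at this level, is that the convergence is a joint limit in $K_1,\ldots,K_n$ rather than a single scalar limit. The measure $\bar\pi_{n,K_n}$ depends on the entire history of particle counts, so when I invoke each clause of Lemma~\ref{induction} I must be careful that the limit $K_{n+1} \to \infty$ commutes with (or can be taken jointly with) the limits of the previous $K_i$'s. The cleanest way to handle this is to phrase each clause of Lemma~\ref{induction} as an implication between weak limits of random empirical measures, so that each is an almost-sure statement that does not care about the exact path by which the earlier $K_i$ tend to infinity; since $N$ is fixed and finite, only a finite chain of such implications needs to be composed, and no uniform-in-$n$ estimates are required. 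All genuinely analytic content --- controlling selection noise, verifying that the importance weight $h$ is $\bar\lambda_n$-integrable, and ensuring $\varrho(t)$ is well-defined under Criterion~\ref{hasse} --- is packaged inside Lemma~\ref{induction}, so the proof of Theorem~\ref{consistency} itself should be short.
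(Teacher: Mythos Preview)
Your proposal is correct and follows essentially the same approach as the paper: the paper's proof is nothing more than the observation that the base case is trivial (or handled by MCMC ergodicity) together with the sentence ``an induction argument with Lemma~\ref{induction} gives the main theorem.'' Your write-up is in fact more careful than the paper's, particularly in flagging the joint-limit subtlety in $K_1,\ldots,K_n$, which the paper leaves implicit.
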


\section{Characterizing changes in the likelihood landscapes when new sequences arrive}

Although the consistency of OPSMC is guaranteed and informative OPSMC samplers can be developed by changing the Markov transition kernels, its applicability is constrained by an implicit assumption: the distance between target distributions of consecutive generations are not too large.
Since SMC methods are built upon the idea of recycling particles from one generation to explore the target distribution of the next generation, it is obvious that one would never be able to design an efficient SMC sampler if $\bar \pi_n$ and $\bar \pi_{n+1}$ are effectively orthogonal.

While a condition on minor changes in the target distributions may be easy to verify in some applications, it is not straightforward in the context of phylogenetic inference.
A similar question on how the ``optimal" trees (under some appropriate measure of optimality) change has been studied extensively in the field, with negative results for almost all regular measures of optimality \citep{heath2008taxon, cueto2011polyhedral}.
To the best of our knowledge, no previous work has been done to investigate how phylogenetic likelihood landscapes change when new sequences arrive.

In this section, we will establish that under some minor regularity conditions on the distribution described in the previous sections, the relative changes between target distributions from consecutive generations are uniformly bounded.
This result enables us to provide a lower bound on the effective sample size of OPSMC algorithms in the next section.

We denote by $T(r,e,x,y)$ the tree obtained by adding an edge of length $y$ to edge $e$ of the tree $r$ at distal position $x$.
Thus, any tree $t$ can be represented by $t=(\varrho(t),e(t),x,y)$, where $e(t)$ is the edge on which the pendant edge containing the most recent taxon is attached.

\begin{Lemma}[Change of variables]
The map $(r,e,x,y) \to T(r,e,x,y)$ is bijective.
 Moreover,
\[
d\mu_{n+1}(t) = \frac{V_n}{V_{n+1}} dx~dy~ de~d\mu_n(r)
\]
where $de$ is the counting measure on the set of edges of an $n$-tree, and again $V_n = (2n-3)!!$.
\label{lem:changeVariables}
\end{Lemma}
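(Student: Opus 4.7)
The plan is to establish bijectivity directly and then compute the change of variables in two pieces: a Jacobian calculation for branch lengths together with a counting bijection for topologies. For bijectivity, I would construct the inverse map explicitly. Given $t \in \mathcal{T}_{n+1}$, the pendant edge carrying taxon $X_{n+1}$ is uniquely determined; call its length $y$ and its internal endpoint $v$. Removing this pendant edge and contracting $v$ merges the two other edges incident to $v$ into a single edge, yielding a unique $r \in \mathcal{T}_n$ together with a distinguished edge $e \in E(\tau)$, namely the merged edge. The distal length $x$ is recovered as the distance from $v$ to the far endpoint of $e$ in $r$. By construction, $T(r,e,x,y)=t$, and this inverse is well-defined, so $T$ is bijective.

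For the measure identity, fix a source topology $\tau$ and edge $e$. The branch-length coordinates of $t$ differ from those of $r$ only in that $l_e$ is replaced by two sub-edge lengths $l_1=x$ and $l_2=l_e-x$, and one extra pendant length $y$ is appended. The linear map $(l_e,x)\mapsto (l_1,l_2)$ has Jacobian determinant $\pm 1$, giving $d\nu(l_1)\,d\nu(l_2)=d\nu(l_e)\,dx$, and hence
\[
\prod_{e'' \in E(\tau')} d\nu(l_{e''}) = dx\,dy\,\prod_{e' \in E(\tau)} d\nu(l_{e'}).
\]
At the topology level, the restriction of the inverse construction gives a bijection between pairs $(\tau,e)$ with $e \in E(\tau)$ and $(n+1)$-taxon topologies $\tau'$. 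Testing against an arbitrary integrable $\phi$,
\[
\int \phi\, d\mu_{n+1} = \frac{1}{V_{n+1}}\sum_{\tau'} \int \phi(\tau',l)\prod d\nu(l_{e''}) = \frac{1}{V_{n+1}}\sum_{\tau}\sum_{e \in E(\tau)}\int \phi\bigl(T(r,e,x,y)\bigr)\,dx\,dy\,\prod d\nu(l_{e'}),
\]
using the Jacobian identity termwise and the topology bijection to rewrite the sum. Multiplying and dividing by $V_n$ and folding $V_n^{-1}\,d\tau\,\prod d\nu(l_{e'})$ back into $d\mu_n(r)$ delivers the claimed factorization $d\mu_{n+1}(t)=\frac{V_n}{V_{n+1}}\,dx\,dy\,de\,d\mu_n(r)$.

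The individual ingredients—inverse construction, Jacobian of a linear map, and enumeration of topology-edge pairs—are each elementary. The one delicate point is articulating the topology-level bijection crisply enough that the sum $\sum_{\tau'}$ over $(n+1)$-taxon topologies can be replaced by the double sum $\sum_\tau \sum_{e \in E(\tau)}$ term by term, and then tracking the $1/V_n$ and $1/V_{n+1}$ normalization factors cleanly through the counting measures $d\tau$, $d\tau'$, and $de$ so that the factor $V_n/V_{n+1}$ appears exactly once in the final formula.
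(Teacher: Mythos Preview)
The paper states this lemma without proof, evidently regarding it as a routine bookkeeping fact; there is therefore no authorial argument to compare against. Your proposal is correct and is precisely the standard argument one would supply: the explicit inverse via deleting the $X_{n+1}$ pendant and suppressing the resulting degree-two vertex establishes bijectivity, the linear change $(l_e,x)\mapsto (x,\,l_e-x)$ has Jacobian of absolute value $1$ so that $d\nu(l_1)\,d\nu(l_2)=d\nu(l_e)\,dx$, and the classical bijection between pairs $(\tau,e)$ with $e\in E(\tau)$ and $(n{+}1)$-taxon topologies $\tau'$ handles the discrete part, after which the normalizing constants $1/V_n$ and $1/V_{n+1}$ assemble into the stated factor. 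This is exactly how the paper implicitly uses the lemma later (e.g.\ in the proof of Lemma~\ref{subtree}, where $\sum_e\int_0^{l_e}dx=(2n-3)\zeta(r)$ and $V_{n+1}=(2n-3)V_n$ are invoked).

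One small caveat worth flagging in a written-up version: because branch lengths are constrained to $[0,b]$, the inverse map $t\mapsto \varrho(t)$ produces a merged edge of length $l_1+l_2$ which can exceed $b$, so strictly speaking the image need not lie in $\mathcal T_n$ as defined. This is a wrinkle in the lemma's statement rather than in your argument, and the paper glosses over it; in practice the prior $\pi_0$ carries the support restriction, so the issue is harmless for the downstream use.
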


This result allows us to derive the following Lemma (detailed proof is provided in the Appendix).

\begin{Lemma}
Consider an arbitrary tree $t \in \mathcal{T}_{n+1}$ obtained from the parent tree $\varrho(t)$ by choosing edge $e$, distal position $x$ and pendant length $y$.
Denote
\[
M(y) = \max_{ij}{G_{ij}(y)}, \h m(y) = \min_{ij}{G_{ij}(y)} \h \text{and} \h \mathcal{Z}_{n}=\int_{s \in \mathcal{T}_{n}}{{\bar\pi_{n}(s)}\zeta(s)~ds}.
\]
We have
\[
\frac{\bar \pi_{n+1}(t)}{\bar \pi_n (\varrho(t))} \le \frac{1}{\mathcal{Z}_n} \frac{M(y)^S}{\int_0^{b}{m(y)^S \, dy}}, \h \forall t \in \mathcal{T}_{n+1}.
\]
\label{subtree}
\end{Lemma}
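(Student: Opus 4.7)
The plan is to write the desired ratio as
\[
\frac{\bar\pi_{n+1}(t)}{\bar\pi_n(\varrho(t))} \;=\; \frac{Z_n}{Z_{n+1}}\cdot\frac{\hat\pi_{n+1}(t)}{\hat\pi_n(\varrho(t))},
\]
where $Z_n=\int_{\mathcal{T}_n}\hat\pi_n\,d\mu_n$ is the normalizing constant, and then control the first factor pointwise and the second factor by an integration argument, both stemming from a single sharp estimate on how the phylogenetic likelihood changes when a new pendant edge is attached.

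The key pointwise input I would establish is that for every parent $r$, edge $e=(i,j)$ of $r$ with length $l_e$, distal position $x\in[0,l_e]$, and pendant length $y\in[0,b]$,
\[
m(y)^S\,L_n(r) \;\le\; L_{n+1}(T(r,e,x,y)) \;\le\; M(y)^S\,L_n(r).
\]
The proof is site-by-site using Felsenstein's recursion: at each site $u$ the new tree replaces the factor $G_{ij}(l_e)$ by a sum over the state $k$ of the new internal node, namely $\sum_{k\in\Omega} G_{ik}(x)\,G_{kj}(l_e-x)\,G_{k,\psi_{n+1}^u}(y)$. Bounding the pendant transition $G_{k,\psi_{n+1}^u}(y)$ in $[m(y),M(y)]$ uniformly in $k$ and applying Chapman--Kolmogorov, $\sum_k G_{ik}(x)G_{kj}(l_e-x)=G_{ij}(l_e)$, recovers the original factor up to a multiplicative constant in $[m(y),M(y)]$. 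Taking the product across the $S$ sites yields the claim; since the insertion does not affect the prior beyond an easily-absorbed contribution from the new pendant, the same two-sided bound transfers to $\hat\pi_{n+1}/\hat\pi_n$.

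The upper half of this estimate immediately gives $\hat\pi_{n+1}(t)/\hat\pi_n(\varrho(t)) \le M(y)^S$. For the denominator I would apply the change-of-variables formula in Lemma~\ref{lem:changeVariables} to get
\[
Z_{n+1} \;=\; \frac{V_n}{V_{n+1}}\int d\mu_n(r)\sum_{e\in E(r)}\int_0^{l_e}\!\!\int_0^b \hat\pi_{n+1}(T(r,e,x,y))\,dy\,dx,
\]
insert the lower bound $\hat\pi_{n+1}(T(r,e,x,y))\ge m(y)^S\,\hat\pi_n(r)$, integrate $x$ over $[0,l_e]$ to pick up the factor $l_e$, and sum over edges to obtain $\sum_{e} l_e = |E(r)|\,\zeta(r)$. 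The combinatorial prefactor $V_n/V_{n+1}$ exactly cancels $|E(r)|$, since every $(n+1)$-taxon topology is obtained uniquely by choosing an edge of its parent topology; this is precisely the bijection underlying Lemma~\ref{lem:changeVariables}. Thus
\[
Z_{n+1} \;\ge\; \int \hat\pi_n(r)\,\zeta(r)\,d\mu_n(r) \cdot \int_0^b m(y)^S\,dy \;=\; Z_n\,\mathcal{Z}_n\int_0^b m(y)^S\,dy,
\]
so $Z_n/Z_{n+1} \le 1/(\mathcal{Z}_n\int_0^b m(y)^S\,dy)$. Multiplying this by the pointwise bound $M(y)^S$ gives the stated inequality.

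The main obstacle is the pointwise likelihood estimate: one must notice that, thanks to Chapman--Kolmogorov, attaching a new leaf has a purely multiplicative effect on the site likelihoods, uniformly sandwiched between $m(y)^S$ and $M(y)^S$, even though the new internal node ostensibly introduces a fresh summation variable coupled across all $S$ sites. Once this is in hand, the remaining work is the bookkeeping in the change-of-variables integral, whose cleanness rests on the combinatorial identity $V_{n+1}/V_n=|E(r)|$ already packaged into Lemma~\ref{lem:changeVariables}, so that no surplus $n$-dependent factor survives.
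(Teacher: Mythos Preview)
Your proposal is correct and follows essentially the same route as the paper: establish the two-sided pointwise bound $m(y)^S \le L_{n+1}(t)/L_n(\varrho(t)) \le M(y)^S$ site-by-site using Chapman--Kolmogorov (what the paper does via the $d^u_{ij}$ formulation of \citep{dinh2015shape}), then use the lower half together with Lemma~\ref{lem:changeVariables} to bound $Z_{n+1}/Z_n$ from below by $\mathcal{Z}_n\int_0^b m(y)^S\,dy$, the combinatorial factor $V_n/V_{n+1}$ cancelling the edge count $|E(r)|=2n-3$. The only cosmetic difference is that you name Chapman--Kolmogorov explicitly and are slightly more cautious about the prior factor, whereas the paper writes $\hat\pi_{n+1}(t)/\hat\pi_n(\varrho(t))=L_{n+1}(t)/L_n(\varrho(t))$ directly.
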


\begin{proof}[Sketch of proof]
By using the one-dimensional formulation of the phylogenetic likelihood function derived in \citep{dinh2015shape}, we can prove that
\begin{equation}
\frac{\hat \pi_{n+1}(t)}{\hat \pi_n (\varrho(t))}=\frac{L_{n+1}(t)}{L_n (\varrho(t))} \le M(y)^S, \h \forall t \in \mathcal{T}_{n+1}.
\label{util1}
\end{equation}
Similarly, we have $\hat \pi_{n+1}(t)/\hat \pi_n (\varrho(t)) \ge m(y)^S$ for all $t \in \mathcal{T}_{n+1}$.

Recall that $\zeta(r)$ is the average branch length of $r$.
Using the fact that for a fixed tree $r$, $\int_0^{l_e}{dx = l_e}$ and $\sum_{e}{l_e} = (2n-3) \zeta(r)$, we have
\begin{align*}
\|\hat \pi_{n+1}\| =& \int_{t \in \mathcal{T}_{n+1}}{\hat \pi_{n+1}(t) \, dt}
\ge \int_{t \in \mathcal{T}_{n+1}}{m(y)^S \hat \pi_{n}(\varrho(t)) \, dt}\\
=& \int_{r,e,x,y}{m(y)^S \hat\pi_n(r) \, dx \, dy} ~ \frac{V_n}{V_{n+1}}de \, dr\\
=& \frac{(2n-3)V_n}{V_{n+1}} \left(\int_0^{b}{m(y)^S \, dy}\right)\int_{r \in \mathcal{T}_{n}}{{\hat \pi_{n}(r)}\zeta(r) \, dr}
\end{align*}
which implies
\[
\frac{\bar \pi_{n+1}(t)}{\bar \pi_n (\varrho(t))} = \frac{\hat \pi_{n+1}(t)}{\hat \pi_n (\varrho(t))}  \frac{\|\hat \pi_{n}\|}{\|\hat \pi_{n+1}\|} \le \frac{1}{\mathcal{Z}_n} \frac{M(y)^S}{\int_0^{b}{m(y)^S~dy}}, \h \forall t \in \mathcal{T}_{n+1}.
\]
\end{proof}


\section{Effective sample sizes of online phylogenetic SMC}

In this section, we are interested in the asymptotic behavior of OPSMC in the limit of large $K_n$, i.e. when the number of particles of the sampler approaches infinity.
This asymptotic behavior is illustrated via estimates of the effective sample size of the sampler with large numbers of particles.
We note that although there are several studies on the stability of SMC as the time step grows, most of them focus on cases where the sequence of target distributions have a common state space of fixed dimension \citep{del1998uniform, douc2008limit, kunsch2005recursive, oudjane2005stability, del2009tree, beskos2014stability}.
In general, establishing stability bounds for SMC requires imposing some conditions on the effect of data at any step $k$ to the target distribution at step $n \ge k$ \citep{crisan2002survey, chopin2004central, doucet2009tutorial}.
Lemma $\ref{subtree}$ helps validate a condition of this type.

The effective sample size \citep{beskos2014stability} of the particles at step $n+1$ is computed as
\[
\text{ESS}_{n+1}= \frac{\left(\sum_{i=1}^K{w^{n+1}_i}\right)^2}{\sum_{i=1}^K{(w^{n+1}_i)^2}}.
\]
The following result, proven in the Appendix, enables us to estimate the asymptotic behavior of the sample's \text{ESS} in various settings.

\begin{Theorem}
In the limit as the number of particles approaches infinity, we have
\[
\lim_{K \to \infty}{\frac{K_{n+1}}{\text{ESS}_{n+1}}} = \int_{t \in \mathcal{T}_{n+1}}{\frac{\bar \pi_{n+1}^2(t)}{\bar \pi_{n}(\varrho(t))~Q^n(\varrho(t),t)}~dt}.
\]
\label{thm:lengthESSLimit}
\end{Theorem}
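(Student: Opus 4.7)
The plan is to rewrite $K_{n+1}/\text{ESS}_{n+1}$ as a ratio of empirical moments of the mutation weights, identify the limit of each moment via the weak convergence $\bar\lambda_{n,K_{n+1}} \to \bar\lambda_n$ supplied by Lemma~\ref{induction}, and then simplify the resulting expression using the identity relating $\bar\lambda_n$, $h$, and $\bar\pi_{n+1}$. Dividing numerator and denominator of $K_{n+1}/\text{ESS}_{n+1}$ by $K_{n+1}^2$ gives
\[
\frac{K_{n+1}}{\text{ESS}_{n+1}} = \frac{\tfrac{1}{K_{n+1}}\sum_i (w^{n+1}_i)^2}{\bigl(\tfrac{1}{K_{n+1}}\sum_i w^{n+1}_i\bigr)^2},
\]
and since $w^{n+1}_i = h(t^{n+1}_i)$ by (\ref{eq1}), the two sums are precisely $\bar\lambda_{n,K_{n+1}}(h)$ and $\bar\lambda_{n,K_{n+1}}(h^2)$, where $\bar\lambda_{n,K_{n+1}}$ is the empirical distribution of the proposed particles.

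Next I would iterate Lemma~\ref{induction} through the selection, Markov transition, and mutation substeps of the $n^\text{th}$ generation to obtain $\bar\lambda_{n,K_{n+1}} \to \bar\lambda_n$ weakly, and test against $h$ and $h^2$. Using $\bar\lambda_n(t) = \bar\pi_n(\varrho(t))\,Q^n(\varrho(t),t)$ together with the definition of $h$, direct algebra gives
\[
h(t)\,\bar\lambda_n(t) = \frac{\hat\pi_{n+1}(t)}{\|\hat\pi_n\|}, \qquad h(t)^2\,\bar\lambda_n(t) = \frac{\hat\pi_{n+1}(t)^2}{\|\hat\pi_n\|\,\hat\pi_n(\varrho(t))\,Q^n(\varrho(t),t)}.
\]
Integrating, $\int h\,d\bar\lambda_n = \|\hat\pi_{n+1}\|/\|\hat\pi_n\|$, and forming the ratio $\int h^2\,d\bar\lambda_n / (\int h\,d\bar\lambda_n)^2$ and substituting $\hat\pi_j = \|\hat\pi_j\|\bar\pi_j$, the normalization constants cancel and leave exactly $\int \bar\pi_{n+1}^2(t)/[\bar\pi_n(\varrho(t))\,Q^n(\varrho(t),t)]\,dt$, as claimed. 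The continuous mapping theorem applies because the denominator limit $\|\hat\pi_{n+1}\|/\|\hat\pi_n\|$ is strictly positive.

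The principal technical obstacle is that the weak convergence supplied by Lemma~\ref{induction} is a priori only valid against ``appropriate'' test functions, whereas $h$ and $h^2$ are ratios of densities and are not automatically bounded. To secure their admissibility I would combine Lemma~\ref{subtree}, which bounds $\bar\pi_{n+1}(t)/\bar\pi_n(\varrho(t))$ uniformly in terms of the pendant length $y$, with the lower bounds on proposal densities provided by Assumption~\ref{distal} in the length-based case or Assumption~\ref{uniform} in the likelihood-based case; together these produce a uniform bound on $h$ over $\mathcal{T}_{n+1}$, so that $h$ and $h^2$ are bounded test functions against which weak convergence can be evaluated. In the likelihood-based setting one additionally uses Assumption~\ref{utility} to control $1/f_n(r,e)$ so that the resulting bound remains independent of $n$.
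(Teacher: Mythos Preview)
Your argument is essentially the same as the paper's: both identify the ESS ratio with empirical moments of the weight function $h$, pass to the limit using the convergence results of Lemma~\ref{induction}, and simplify using the identity $h\bar\lambda_n=\hat\pi_{n+1}/\|\hat\pi_n\|$. The only cosmetic difference is that the paper tests $h$ against the \emph{weighted} empirical measure $\bar\pi_{n+1,K}$ (so that $\bar\pi_{n+1,K}(h)=\|\hat\pi_{n+1,K}\|/\text{ESS}_{n+1}$) and then invokes~\eqref{eqess} for the normalizer, whereas you test $h$ and $h^2$ against the \emph{unweighted} proposal measure $\bar\lambda_{n,K}$; the algebra is identical.

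Your final paragraph, where you worry about $h$ being an admissible test function, is actually more careful than the paper, which simply applies Theorem~\ref{consistency} with $\phi=h$ without comment. One small overclaim: in the length-based case, Assumption~\ref{distal} only controls $\int 1/p^e_X\,dx$ and $\int 1/p_Y\,dy$, not pointwise lower bounds on $p^e_X$ or $p_Y$, so it does not by itself yield a uniform bound on $h$. What it does give (via the computation in the proof of Theorem~\ref{lengthbased}) is finiteness of $\int h^2\,d\bar\lambda_n$, which together with a standard uniform-integrability argument suffices for the limit; you may want to phrase that step accordingly.
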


This asymptotic estimate and the results on likelihood landscapes from the previous section allow us to prove the following Theorem.

\begin{Theorem}[Effective sample size of OPSMC for likelihood-based proposals]
If Assumptions~$\ref{branchlength}$, $\ref{uniform}$ and $\ref{utility}$ hold, then there exists $\alpha>0$ independent of $n$ such that $\text{ESS}_{n} \ge \alpha K_{n}$.
That is, the effective sample size of an OPSMC with likelihood-based proposals are bounded below by a constant multiple of the number of particles.
Moreover, if Assumption $\ref{branchlength}$ does not hold, the effective sample size of OPSMC algorithms decays at most linearly as the dimension increases.
\label{likelihoodbased}
\end{Theorem}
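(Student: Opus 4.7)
The plan is to bound the limiting quantity $I_{n+1} := \int \bar\pi_{n+1}^2(t)/[\bar\pi_n(\varrho(t))\, Q^n(\varrho(t),t)]\, dt$ produced by Theorem~\ref{thm:lengthESSLimit} from above by a constant $1/\alpha$ independent of $n$; since $K_{n+1}/\text{ESS}_{n+1} \to I_{n+1}$, this directly yields $\text{ESS}_{n+1} \ge \alpha K_{n+1}$ with $\alpha$ independent of $n$. The four ingredients I would combine are Lemma~\ref{subtree} (to control the pointwise ratio $\bar\pi_{n+1}/\bar\pi_n$), Lemma~\ref{lem:changeVariables} (to convert between integration on $\mathcal{T}_{n+1}$ and on $(r,e,x,y)$), Assumption~\ref{uniform} (a uniform lower bound on $p_X^e$ and $p_Y$), and Assumption~\ref{utility} (which lets me replace the utility $f_n$ by the analytically tractable $\mathcal{G}_n$).

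I would first split $\bar\pi_{n+1}^2 = \bar\pi_{n+1}\cdot\bar\pi_{n+1}$ and apply Lemma~\ref{subtree} to just one of the two factors, obtaining
\[
I_{n+1} \;\le\; \frac{1}{\mathcal{Z}_n\int_0^b m(y)^S\, dy}\int \frac{M(y(t))^S\, \bar\pi_{n+1}(t)}{Q^n(\varrho(t), t)}\, d\mu_{n+1}(t).
\]
Next, expanding $Q^n(r, T(r,e,x,y))$ in terms of the edge-selection probability $f_n(r,e)/F_n(r)$, $p_X^e$, and $p_Y$ (where $F_n(r) = \sum_{e'} f_n(r, e')$), I would invoke Assumption~\ref{uniform} to bound $1/(p_X^e p_Y) \le a_0$ and Assumption~\ref{utility} to bound $F_n(r)/f_n(r,e) \le (c_2/c_1)\sum_{e'} \mathcal{G}_n(r,e')/\mathcal{G}_n(r,e)$. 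A change of variables via Lemma~\ref{lem:changeVariables} then rewrites the remaining integral over $(r,e,x,y)$, bringing combinatorial factors of $V_n$ and $V_{n+1}$ that must be tracked carefully.

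The crux is the elementary observation that, by the definition $\mathcal{G}_n(r,e) = \int_{x,y} \hat\pi_{n+1}(T(r,e,x,y))\, dx\, dy$, the ratio $\hat\pi_{n+1}(T)/\mathcal{G}_n(r,e)$ is a probability density on $(x,y)$ for each fixed $(r,e)$; combined with $M(y)^S \le 1$ this yields the key estimate $\int_{x,y} M(y)^S\, \hat\pi_{n+1}(T)/\mathcal{G}_n(r,e)\, dx\, dy \le 1$. Summing over the $2n-3$ edges $e$ and then using the reverse change of variables $\int_r \sum_{e'} \mathcal{G}_n(r,e')\, d\mu_n(r) = (V_{n+1}/V_n)\|\hat\pi_{n+1}\|$, the factor $\|\hat\pi_{n+1}\|$ cancels the one sitting inside $\bar\pi_{n+1} = \hat\pi_{n+1}/\|\hat\pi_{n+1}\|$, and the polynomial factors $(2n-3)$ and $(2n-1)=V_{n+1}/V_n$ telescope against the inverse factors produced by the change of variables and the normalisation of $Q^n$, leaving a quantity bounded by $1$. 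The surviving estimate is $I_{n+1} \le a_0 c_2/[c_1 \mathcal{Z}_n \int_0^b m(y)^S\, dy]$, and Assumption~\ref{branchlength} ($\mathcal{Z}_n \ge c$) now finishes the first claim. Dropping Assumption~\ref{branchlength} leaves $I_{n+1} \lesssim 1/\mathcal{Z}_n$; since $\mathcal{Z}_n$ is the posterior expectation of an average over $2n-3$ branch lengths, a mild lower bound on the expected total branch length forces $\mathcal{Z}_n \gtrsim 1/n$, giving the at-most-linear-decay conclusion. I expect the main obstacle to be exactly this delicate bookkeeping: uniformity in $n$ hinges on the $(2n-3)$ from the edge sum and the $(2n-1)$ from $V_{n+1}/V_n$ exactly cancelling the inverse factors introduced by the change of variables and the density conversion of $Q^n$, so any looser accounting at these steps would yield a bound that grows with the dimension.
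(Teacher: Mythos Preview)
Your proposal is correct and follows essentially the same route as the paper: pull out one factor of $\bar\pi_{n+1}/\bar\pi_n$ via Lemma~\ref{subtree}, expand $Q^n$, bound $1/(p_X^e p_Y)$ by $a_0$, use Assumption~\ref{utility} to pass from $f_n$ to $\mathcal{G}_n$, exploit $\int_{x,y}\hat\pi_{n+1}(T)/\mathcal{G}_n(r,e)\,dx\,dy=1$, and then let the combinatorial factors from the edge count and from Lemma~\ref{lem:changeVariables} cancel the two $(V_n/V_{n+1})$ factors. The only cosmetic difference is that the paper applies $M(y)^S\le 1$ immediately and keeps $f_n(r)=F_n(r)$ intact until the very end (bounding $\int_r f_n(r)\,dr\le c_2(2n-3)\|\hat\pi_{n+1}\|$ separately), whereas you convert both numerator and denominator to $\mathcal{G}_n$ at once; either bookkeeping gives the same bound $I_{n+1}\le a_0 c_2/(c_1 u_1\mathcal{Z}_n)$, and your treatment of the linear-decay fallback matches the paper's ``crude estimate $\mathcal{Z}_n\ge \mathcal{Z}_1/n$''.
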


\begin{proof}[Proof of Theorem $\ref{likelihoodbased}$]

Define $f_n(r)= \sum_{e}{f_n(r,e)}$, we have
\[
\int_{r}{f_n(r)~ds} \le c_2 \int_{r}{ \sum_{e}{\int_{x,y}{\hat \pi_{n+1}(T(r,e, x,y))~dx~dy}}~dr}
=  c_2 (2n-3) \|\hat \pi_{n+1}\|.
\]
Since edge $e$ is chosen from a multinomial distribution weighted by $f_n(r,e)$, given any tree $t \in \mathcal{T}_{n+1}$ obtained from the parent tree $\varrho(t)$, chosen edge $e(t)$, distal position $x$ and pendant length $y$,
\[
Q^n(\varrho(t),t) = \frac{V_{n+1}}{V_n} \frac{f(\varrho(t), e(t))}{f(\varrho(t))}~ p_X(x) ~ p_Y(y).
\]
By Lemma $\ref{subtree}$ and the fact that $M(y) \le 1$, we have
\[
\frac{\bar \pi_{n+1}(t)}{\bar \pi_n (\varrho(t))} \le \frac{1}{\mathcal{Z}_n} \frac{M(y)^S}{\int_0^{b}{m(y)^S \, dy}} \le \frac{1}{u_1 \mathcal{Z}_n},
\]
where $u_1 = \int_0^{b}{m(y)^S~dy}$ and $\mathcal{Z}_n$ are defined as in the proof of Lemma $\ref{lengthbased}$.
Using Assumptions~$\ref{branchlength}$ and $\ref{uniform}$, $\eqref{eq-1}$, Lemma~\ref{lem:changeVariables} and similar arguments as in the previous proof, we have
\begin{align*}
&\int_{t \in \mathcal{T}_{n+1}}{\frac{\bar \pi_{n+1}^2(t)}{\bar \pi_{n}(\varrho(t))~Q^n(\varrho(t),t)}~dt} \\
&\le a_0 \frac{1}{u_1 \mathcal{Z}_n} \left(\frac{V_n}{V_{n+1}}\right)^2   \int_{r,e}{ \frac{f_n(r)}{f_n(r,e)} \int_{x,y}{\bar \pi_{n+1}(T(r,e,x,y))~dx ~dy}~dr ~de}\\
&\le \frac{a_0}{c_1} \frac{1}{u_1 \mathcal{Z}_n} \frac{1}{\|\hat \pi_{n+1}\|} \left(\frac{V_n}{V_{n+1}}\right)^2  \left(\int_{r}{ f_n(r)ds}\right) \left( \int_{e}{de}\right)\\
&\le (2n-3)^2 a_0 \frac{c_2}{c_1} \frac{1}{u_1 \mathcal{Z}_n}  \left(\frac{V_n}{V_{n+1}}\right)^2 = a_0 \frac{c_2}{c_1} \frac{1}{u_1 \mathcal{Z}_n}.
\end{align*}
Thus by Theorem~\ref{thm:lengthESSLimit} there exists $\alpha>0$ independent of $K_n$ and $n$ such that $\text{ESS}_{n} \ge \alpha K_n$.
We also note that without the assumption on average branch lengths, a crude estimate gives $\mathcal{Z}_n \ge \mathcal{Z}_1/n$, which leads to a linear decay in the upper bound on the ESS.
\end{proof}

We also have similar estimates for length-based proposals (see Appendix for proof):

\begin{Theorem}[Effective sample size of OPSMC for length-based proposals]
If Assumptions~$\ref{distal}$ and $\ref{branchlength}$ hold, then the effective sample size of OPSMC with length-based proposals are bounded below by a constant multiple of the number of particles.
Moreover, if Assumption $\ref{branchlength}$ does not hold, the effective sample size of OPSMC algorithms decays at most quadratically as the dimension increases.
\label{lengthbased}
\end{Theorem}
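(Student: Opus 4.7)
The plan is to apply Theorem~\ref{thm:lengthESSLimit}, which reduces the claim to showing a uniform-in-$n$ upper bound on
\[
I_n := \int_{\mathcal{T}_{n+1}} \frac{\bar\pi_{n+1}^2(t)}{\bar\pi_n(\varrho(t))\, Q^n(\varrho(t),t)}\, d\mu_{n+1}(t).
\]
Under the length-based scheme the edge is chosen with probability $l_{e(t)}/[(2n-3)\zeta(\varrho(t))]$, so
\[
Q^n(\varrho(t),t) = \frac{V_{n+1}}{V_n}\cdot\frac{l_{e(t)}}{(2n-3)\zeta(\varrho(t))}\, p_X^{e(t)}(x)\, p_Y(y).
\]
I would substitute this into $I_n$ and eliminate the ratio $\bar\pi_{n+1}^2/\bar\pi_n$ by invoking Lemma~\ref{subtree} twice; after using $M(y)\le 1$ this yields the pointwise estimate
\[
\frac{\bar\pi_{n+1}^2(t)}{\bar\pi_n(\varrho(t))} \le \frac{1}{u_1^2\,\mathcal{Z}_n^2}\, \bar\pi_n(\varrho(t)),
\]
with $u_1 := \int_0^b m(y)^S\,dy$, removing $\bar\pi_{n+1}$ from the integrand entirely.

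Next I would apply Lemma~\ref{lem:changeVariables} to rewrite the integral over $\mathcal{T}_{n+1}$ as an iterated integral over $r\in\mathcal{T}_n$, edges $e$ of $r$, distal position $x\in[0,l_e]$, and pendant length $y\in[0,b]$, picking up a Jacobian $V_n/V_{n+1} = 1/(2n-1)$. The $x$- and $y$-integrals are then controlled by Assumption~\ref{distal}: the bound $\int_0^{l_e} 1/p_X^e(x)\,d\nu(x) \le C\, l_e^2$ absorbs the $1/l_e$ that comes from $Q^n$ into a factor $C\, l_e$, and $\int_0^b 1/p_Y(y)\,d\nu(y) =: C_Y$ is a finite constant. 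Summing over edges via $\sum_e l_e = (2n-3)\,\zeta(r)$ and collecting the $\zeta(r)$ that comes from $Q^n$ produces
\[
I_n \le \frac{C\,C_Y\,(2n-3)^2}{u_1^2\,\mathcal{Z}_n^2}\,\Bigl(\frac{V_n}{V_{n+1}}\Bigr)^2 \int_{\mathcal{T}_n} \bar\pi_n(r)\,\zeta(r)^2\, d\mu_n(r).
\]
The combinatorial prefactor $[(2n-3)/(2n-1)]^2$ is bounded by $1$, and using $\zeta(r)\le b$ the remaining integral is at most $b\,\mathcal{Z}_n$, so $I_n \le C\,C_Y\,b/(u_1^2\,\mathcal{Z}_n)$. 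Assumption~\ref{branchlength} then gives $\mathcal{Z}_n \ge c$ uniformly in $n$ and hence $\mathrm{ESS}_n \ge \alpha K_n$ for some $\alpha>0$ independent of $n$.

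The main obstacle is bookkeeping rather than any deep analytic step: the explosive combinatorial factors $V_{n+1}/V_n = 2n-1$ and the edge-normalization $(2n-3)$ must cancel exactly against the geometric factors produced by the change of variables and the identity $\sum_e l_e = (2n-3)\,\zeta(r)$, so that no residual $n$-dependence survives. For the final assertion, skipping the $\zeta(r)\le b$ simplification preserves $\int\bar\pi_n\,\zeta^2/\mathcal{Z}_n^2$ in the estimate; combining this with the crude bound $\mathcal{Z}_n \gtrsim \mathcal{Z}_1/n$ yields at most quadratic growth of $K_n/\mathrm{ESS}_n$ in $n$.
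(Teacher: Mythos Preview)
Your proposal is correct and follows essentially the same route as the paper: write out $Q^n$, apply Lemma~\ref{subtree} to bound $\bar\pi_{n+1}/\bar\pi_n$, use the change of variables from Lemma~\ref{lem:changeVariables}, control the $x$- and $y$-integrals via Assumption~\ref{distal}, collapse the edge sum using $\sum_e l_e = (2n-3)\zeta(r)$, and finish with $\zeta\le b$ and $\mathcal{Z}_n\ge c$. The only cosmetic differences are that the paper retains the factor $M(y)^{2S}$ in the $y$-integral (defining $u_2:=\int_0^b M(y)^{2S}/p_Y(y)\,dy$) rather than discarding it via $M(y)\le 1$, and bounds $\zeta^2\le b^2$ directly rather than $\zeta^2\le b\,\zeta$; your bookkeeping of the combinatorial ratio $(2n-3)^2(V_n/V_{n+1})^2$ is also handled correctly.
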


In summary, we are able to prove that in many settings, the effective sample size of OPSMC is bounded from below.
These results are interesting, since in the general case it is known that SMC-type algorithms may suffer from the curse-of-dimensionality:  when the dimension of the problem increases, the number of the particles must increase exponentially to maintain a constant effective sample size \citep{chopin2004central, bengtsson2008curse, bickel2008sharp, snyder2008obstacles}.

\section{Discussion}
In this paper, we establish foundations for Online Phylogenetic Sequential Monte Carlo (OPSMC), including essential theoretical convergence results.
We prove that under some mild regularity conditions and with carefully constructed proposals, the OPSMC sampling algorithm is consistent.
This includes relaxing the condition used in \cite{bouchard2012phylogenetic}, in which the authors assume that the weight of the particles are bounded from above.
We then investigate two different classes of sampling schemes for online phylogenetic inference: length-based proposals and likelihood-based proposals.
In both cases, we show the effective sample size to be bounded below by a multiple of the number of particles.

The consistency and convergence results in this paper apply to a variety of sampling strategies.
One possibility would be for an algorithm to use a large number of particles, directly using the SMC machinery to approximate the posterior.
Alternatively, the SMC part of the sampler could be quite limited, resulting in an algorithm which combines many independent parallel MCMC runs in a principled way.
As described above, the SMC portion of the algorithm enables MCMC transition kernels that would normally be disallowed by the requirement of preserving detailed balance.
For example, one could use a kernel that focuses effort around the part of the tree which has recently been disturbed by adding a new sequence.

In the future we will develop efficient and practical implementations of these ideas.
Many challenges remain.
For example, the exclusive focus of this paper has been on the tree structure, consisting of topology and branch lengths.
However, Bayesian phylogenetics algorithms typically co-estimate mutation model parameters along with tree structures.
Although proposals for other model parameters can be obtained by particle MCMC \citep{Andrieu2010-us}, we have not attempted to incorporate it into the current SMC framework.
In addition, we note that the input for this type of phylogenetics algorithm consists of a \emph{multiple sequence alignment} (MSA) of many sequences, rather than just individual sequences themselves.
This raises the question of how to maintain an up-to-date MSA.
Programs exist to add sequences into existing MSAs \citep{Caporaso2010-dm,Katoh2013-dl}, although from a statistical perspective, it could be preferable to jointly estimate a sequence alignment and tree posterior \citep{Suchard2006-lx}.
It is an open question how that could be done in an online fashion, although in principle it could be facilitated by some modifications to the sequence addition proposals described here.

\newpage

\bibliographystyle{plainnat}
\bibliography{sequential}

\begin{thebibliography}{30}
\providecommand{\natexlab}[1]{#1}
\providecommand{\url}[1]{\texttt{#1}}
\expandafter\ifx\csname urlstyle\endcsname\relax
  \providecommand{\doi}[1]{doi: #1}\else
  \providecommand{\doi}{doi: \begingroup \urlstyle{rm}\Url}\fi

\bibitem[Andrieu et~al.(2001)Andrieu, Doucet, and
  Punskaya]{andrieu2001sequential}
Christophe Andrieu, Arnaud Doucet, and Elena Punskaya.
\newblock Sequential {M}onte {C}arlo methods for optimal filtering.
\newblock In \emph{Sequential Monte Carlo Methods in Practice}, pages 79--95.
  Springer, 2001.

\bibitem[Andrieu et~al.(2010)Andrieu, Doucet, and Holenstein]{Andrieu2010-us}
Christophe Andrieu, Arnaud Doucet, and Roman Holenstein.
\newblock Particle markov chain monte carlo methods.
\newblock \emph{J. R. Stat. Soc. Series B Stat. Methodol.}, 72\penalty0
  (3):\penalty0 269--342, 2010.
\newblock ISSN 1369-7412.
\newblock \doi{10.1111/j.1467-9868.2009.00736.x}.
\newblock URL \url{http://dx.doi.org/10.1111/j.1467-9868.2009.00736.x}.

\bibitem[Bengtsson et~al.(2008)Bengtsson, Bickel, Li,
  et~al.]{bengtsson2008curse}
Thomas Bengtsson, Peter Bickel, Bo~Li, et~al.
\newblock Curse-of-dimensionality revisited: Collapse of the particle filter in
  very large scale systems.
\newblock In \emph{Probability and statistics: Essays in honor of David A.
  Freedman}, pages 316--334. Institute of Mathematical Statistics, 2008.

\bibitem[Berger et~al.(2011)Berger, Krompass, and Stamatakis]{Berger2011-zy}
Simon~A Berger, Denis Krompass, and Alexandros Stamatakis.
\newblock Performance, accuracy, and web server for evolutionary placement of
  short sequence reads under maximum likelihood.
\newblock \emph{Syst. Biol.}, 60\penalty0 (3):\penalty0 291--302, May 2011.
\newblock ISSN 1063-5157, 1076-836X.
\newblock \doi{10.1093/sysbio/syr010}.
\newblock URL \url{http://dx.doi.org/10.1093/sysbio/syr010}.

\bibitem[Beskos et~al.(2014)Beskos, Crisan, Jasra, et~al.]{beskos2014stability}
Alexandros Beskos, Dan Crisan, Ajay Jasra, et~al.
\newblock On the stability of sequential monte carlo methods in high
  dimensions.
\newblock \emph{The Annals of Applied Probability}, 24\penalty0 (4):\penalty0
  1396--1445, 2014.

\bibitem[Bickel et~al.(2008)Bickel, Li, Bengtsson, et~al.]{bickel2008sharp}
Peter Bickel, Bo~Li, Thomas Bengtsson, et~al.
\newblock Sharp failure rates for the bootstrap particle filter in high
  dimensions.
\newblock In \emph{Pushing the limits of contemporary statistics: Contributions
  in honor of Jayanta K. Ghosh}, pages 318--329. Institute of Mathematical
  Statistics, 2008.

\bibitem[Bouchard-C\^{o}t\'{e}(2014)]{Bouchard-Cote2014-fw}
Alexandre Bouchard-C\^{o}t\'{e}.
\newblock {SMC} (sequential monte carlo) for bayesian phylogenetics.
\newblock In Ming-Hui Chen, Lynn Kuo, and Paul~O Lewis, editors, \emph{Bayesian
  Phylogenetics: Methods, Algorithms, and Applications}. CRC Press, 2014.

\bibitem[Bouchard-C{\^o}t{\'e} et~al.(2012)Bouchard-C{\^o}t{\'e}, Sankararaman,
  and Jordan]{bouchard2012phylogenetic}
Alexandre Bouchard-C{\^o}t{\'e}, Sriram Sankararaman, and Michael~I Jordan.
\newblock Phylogenetic inference via sequential monte carlo.
\newblock \emph{Systematic biology}, 61\penalty0 (4):\penalty0 579--593, 2012.

\bibitem[Caporaso et~al.(2010)Caporaso, Bittinger, Bushman, DeSantis, Andersen,
  and Knight]{Caporaso2010-dm}
J~Gregory Caporaso, Kyle Bittinger, Frederic~D Bushman, Todd~Z DeSantis, Gary~L
  Andersen, and Rob Knight.
\newblock {PyNAST}: a flexible tool for aligning sequences to a template
  alignment.
\newblock \emph{Bioinformatics}, 26\penalty0 (2):\penalty0 266--267, 15~January
  2010.
\newblock ISSN 1367-4803, 1367-4811.
\newblock \doi{10.1093/bioinformatics/btp636}.
\newblock URL \url{http://dx.doi.org/10.1093/bioinformatics/btp636}.

\bibitem[Chopin(2004)]{chopin2004central}
Nicolas Chopin.
\newblock Central limit theorem for sequential {M}onte {C}arlo methods and its
  application to {B}ayesian inference.
\newblock \emph{Annals of Statistics}, pages 2385--2411, 2004.

\bibitem[Crisan and Doucet(2002)]{crisan2002survey}
Dan Crisan and Arnaud Doucet.
\newblock A survey of convergence results on particle filtering methods for
  practitioners.
\newblock \emph{IEEE Transactions on signal processing}, 50\penalty0
  (3):\penalty0 736--746, 2002.

\bibitem[Cueto and Matsen(2011)]{cueto2011polyhedral}
Mar{\'\i}a~Ang{\'e}lica Cueto and Frederick~A Matsen.
\newblock Polyhedral geometry of phylogenetic rogue taxa.
\newblock \emph{Bulletin of Mathematical Biology}, 73\penalty0 (6):\penalty0
  1202--1226, 2011.

\bibitem[Del~Moral(1998)]{del1998uniform}
Pierre Del~Moral.
\newblock A uniform convergence theorem for the numerical solving of the
  nonlinear filtering problem.
\newblock \emph{Journal of Applied Probability}, pages 873--884, 1998.

\bibitem[Del~Moral et~al.(2009)Del~Moral, Patras, Rubenthaler,
  et~al.]{del2009tree}
Pierre Del~Moral, Fr{\'e}d{\'e}ric Patras, Sylvain Rubenthaler, et~al.
\newblock Tree based functional expansions for {F}eynman--{K}ac particle
  models.
\newblock \emph{The Annals of Applied Probability}, 19\penalty0 (2):\penalty0
  778--825, 2009.

\bibitem[Dinh and Matsen(2016)]{dinh2015shape}
Vu~Dinh and Frederick~A Matsen.
\newblock The shape of the one-dimensional phylogenetic likelihood function.
\newblock \emph{in press, The Annals of Applied Probability}, 2016.
\newblock \url{http://arxiv.org/abs/1507.03647}.

\bibitem[Douc and Moulines(2008)]{douc2008limit}
Randal Douc and Eric Moulines.
\newblock Limit theorems for weighted samples with applications to sequential
  {M}onte {C}arlo methods.
\newblock \emph{The Annals of Statistics}, pages 2344--2376, 2008.

\bibitem[Doucet and Johansen(2009)]{doucet2009tutorial}
Arnaud Doucet and Adam~M Johansen.
\newblock A tutorial on particle filtering and smoothing: Fifteen years later.
\newblock 2009.

\bibitem[Felsenstein(2004)]{felsenstein2004inferring}
Joseph Felsenstein.
\newblock \emph{Inferring phylogenies}, volume~2.
\newblock Sinauer Associates Sunderland, 2004.

\bibitem[Gardy et~al.(2015)Gardy, Loman, and Rambaut]{Gardy2015-rb}
Jennifer Gardy, Nicholas~J Loman, and Andrew Rambaut.
\newblock Real-time digital pathogen surveillance --- the time is now.
\newblock \emph{Genome Biol.}, 16\penalty0 (1):\penalty0 155, 30~July 2015.
\newblock ISSN 1465-6906.
\newblock \doi{10.1186/s13059-015-0726-x}.
\newblock URL
  \url{http://www.genomebiology.com/content/pdf/s13059-015-0726-x.pdf}.

\bibitem[Heath et~al.(2008)Heath, Hedtke, and Hillis]{heath2008taxon}
Tracy~A Heath, Shannon~M Hedtke, and David~M Hillis.
\newblock Taxon sampling and the accuracy of phylogenetic analyses.
\newblock \emph{Journal of Systematics and Evolution}, 46\penalty0
  (3):\penalty0 239--257, 2008.

\bibitem[Izquierdo-Carrasco et~al.(2014)Izquierdo-Carrasco, Cazes, Smith, and
  Stamatakis]{Izquierdo-Carrasco2014-hu}
Fernando Izquierdo-Carrasco, John Cazes, Stephen~A Smith, and Alexandros
  Stamatakis.
\newblock {PUmPER}: phylogenies updated perpetually.
\newblock \emph{Bioinformatics}, 30\penalty0 (10):\penalty0 1476--1477, 15~May
  2014.
\newblock ISSN 1367-4803, 1367-4811.
\newblock \doi{10.1093/bioinformatics/btu053}.
\newblock URL \url{http://dx.doi.org/10.1093/bioinformatics/btu053}.

\bibitem[Katoh and Standley(2013)]{Katoh2013-dl}
Kazutaka Katoh and Daron~M Standley.
\newblock {MAFFT} multiple sequence alignment software version 7: improvements
  in performance and usability.
\newblock \emph{Mol. Biol. Evol.}, 30\penalty0 (4):\penalty0 772--780, April
  2013.
\newblock ISSN 0737-4038, 1537-1719.
\newblock \doi{10.1093/molbev/mst010}.
\newblock URL \url{http://dx.doi.org/10.1093/molbev/mst010}.

\bibitem[K{\"u}nsch(2005)]{kunsch2005recursive}
Hans~R K{\"u}nsch.
\newblock Recursive {M}onte {C}arlo filters: algorithms and theoretical
  analysis.
\newblock \emph{Annals of Statistics}, pages 1983--2021, 2005.

\bibitem[Matsen et~al.(2010)Matsen, Kodner, and Armbrust]{Matsen2010-ze}
Frederick Matsen, Robin Kodner, and E~Virginia Armbrust.
\newblock pplacer: linear time maximum-likelihood and bayesian phylogenetic
  placement of sequences onto a fixed reference tree.
\newblock \emph{BMC Bioinformatics}, 11\penalty0 (1):\penalty0 538, 2010.
\newblock ISSN 1471-2105.
\newblock \doi{10.1186/1471-2105-11-538}.
\newblock URL \url{http://www.biomedcentral.com/1471-2105/11/538}.

\bibitem[Neher and Bedford(2015)]{Neher2015-jr}
Richard~A Neher and Trevor Bedford.
\newblock nextflu: Real-time tracking of seasonal influenza virus evolution in
  humans.
\newblock \emph{Bioinformatics}, 26~June 2015.
\newblock ISSN 1367-4803, 1367-4811.
\newblock \doi{10.1093/bioinformatics/btv381}.
\newblock URL \url{http://dx.doi.org/10.1093/bioinformatics/btv381}.

\bibitem[Oudjane and Rubenthaler(2005)]{oudjane2005stability}
Nadia Oudjane and Sylvain Rubenthaler.
\newblock Stability and uniform particle approximation of nonlinear filters in
  case of non ergodic signals.
\newblock \emph{Stochastic Analysis and Applications}, 23\penalty0
  (3):\penalty0 421--448, 2005.

\bibitem[Quick et~al.(2016)Quick, Loman, Duraffour, Simpson, Severi, Cowley,
  Bore, Koundouno, Dudas, Mikhail, Ou\'{e}draogo, Afrough, Bah, Baum,
  Becker-Ziaja, Boettcher, Cabeza-Cabrerizo, Camino-S\'{a}nchez, Carter,
  Doerrbecker, Enkirch, Garc\'{\i}a-Dorival, Hetzelt, Hinzmann, Holm,
  Kafetzopoulou, Koropogui, Kosgey, Kuisma, Logue, Mazzarelli, Meisel, Mertens,
  Michel, Ngabo, Nitzsche, Pallasch, Patrono, Portmann, Repits, Rickett,
  Sachse, Singethan, Vitoriano, Yemanaberhan, Zekeng, Racine, Bello, Sall,
  Faye, Faye, Magassouba, Williams, Amburgey, Winona, Davis, Gerlach,
  Washington, Monteil, Jourdain, Bererd, Camara, Somlare, Camara, Gerard, Bado,
  Baillet, Delaune, Nebie, Diarra, Savane, Pallawo, Gutierrez, Milhano, Roger,
  Williams, Yattara, Lewandowski, Taylor, Rachwal, Turner, Pollakis, Hiscox,
  Matthews, O'Shea, Johnston, Wilson, Hutley, Smit, Di~Caro, W{\"{o}}lfel,
  Stoecker, Fleischmann, Gabriel, Weller, Koivogui, Diallo, Ke{\"{\i}}ta,
  Rambaut, Formenty, G{\"{u}}nther, and Carroll]{Quick2016-fz}
Joshua Quick, Nicholas~J Loman, Sophie Duraffour, Jared~T Simpson, Ettore
  Severi, Lauren Cowley, Joseph~Akoi Bore, Raymond Koundouno, Gytis Dudas, Amy
  Mikhail, Nobila Ou\'{e}draogo, Babak Afrough, Amadou Bah, Jonathan H~J Baum,
  Beate Becker-Ziaja, Jan~Peter Boettcher, Mar Cabeza-Cabrerizo, \'{A}lvaro
  Camino-S\'{a}nchez, Lisa~L Carter, Juliane Doerrbecker, Theresa Enkirch,
  Isabel Garc\'{\i}a-Dorival, Nicole Hetzelt, Julia Hinzmann, Tobias Holm,
  Liana~Eleni Kafetzopoulou, Michel Koropogui, Abigael Kosgey, Eeva Kuisma,
  Christopher~H Logue, Antonio Mazzarelli, Sarah Meisel, Marc Mertens, Janine
  Michel, Didier Ngabo, Katja Nitzsche, Elisa Pallasch, Livia~Victoria Patrono,
  Jasmine Portmann, Johanna~Gabriella Repits, Natasha~Y Rickett, Andreas
  Sachse, Katrin Singethan, In\^{e}s Vitoriano, Rahel~L Yemanaberhan, Elsa~G
  Zekeng, Trina Racine, Alexander Bello, Amadou~Alpha Sall, Ousmane Faye, Oumar
  Faye, N'faly Magassouba, Cecelia~V Williams, Victoria Amburgey, Linda Winona,
  Emily Davis, Jon Gerlach, Frank Washington, Vanessa Monteil, Marine Jourdain,
  Marion Bererd, Alimou Camara, Hermann Somlare, Abdoulaye Camara, Marianne
  Gerard, Guillaume Bado, Bernard Baillet, D\'{e}borah Delaune,
  Koumpingnin~Yacouba Nebie, Abdoulaye Diarra, Yacouba Savane, Raymond~Bernard
  Pallawo, Giovanna~Jaramillo Gutierrez, Natacha Milhano, Isabelle Roger,
  Christopher~J Williams, Facinet Yattara, Kuiama Lewandowski, James Taylor,
  Phillip Rachwal, Daniel~J Turner, Georgios Pollakis, Julian~A Hiscox, David~A
  Matthews, Matthew~K O'Shea, Andrew~Mcd Johnston, Duncan Wilson, Emma Hutley,
  Erasmus Smit, Antonino Di~Caro, Roman W{\"{o}}lfel, Kilian Stoecker, Erna
  Fleischmann, Martin Gabriel, Simon~A Weller, Lamine Koivogui, Boubacar
  Diallo, Sakoba Ke{\"{\i}}ta, Andrew Rambaut, Pierre Formenty, Stephan
  G{\"{u}}nther, and Miles~W Carroll.
\newblock Real-time, portable genome sequencing for ebola surveillance.
\newblock \emph{Nature}, 530\penalty0 (7589):\penalty0 228--232, 11~February
  2016.
\newblock ISSN 0028-0836, 1476-4687.
\newblock \doi{10.1038/nature16996}.
\newblock URL \url{http://dx.doi.org/10.1038/nature16996}.

\bibitem[Snyder et~al.(2008)Snyder, Bengtsson, Bickel, and
  Anderson]{snyder2008obstacles}
Chris Snyder, Thomas Bengtsson, Peter Bickel, and Jeff Anderson.
\newblock Obstacles to high-dimensional particle filtering.
\newblock \emph{Monthly Weather Review}, 136\penalty0 (12):\penalty0
  4629--4640, 2008.

\bibitem[Suchard and Redelings(2006)]{Suchard2006-lx}
Marc~A Suchard and Benjamin~D Redelings.
\newblock {BAli-Phy}: simultaneous bayesian inference of alignment and
  phylogeny.
\newblock \emph{Bioinformatics}, 22\penalty0 (16):\penalty0 2047--2048,
  15~August 2006.
\newblock ISSN 1367-4803.
\newblock \doi{10.1093/bioinformatics/btl175}.
\newblock URL
  \url{http://bioinformatics.oxfordjournals.org/content/22/16/2047.abstract}.

\bibitem[Wang et~al.(2015)Wang, Bouchard-C\^{o}t\'{e}, and Doucet]{Wang2015-nm}
Liangliang Wang, Alexandre Bouchard-C\^{o}t\'{e}, and Arnaud Doucet.
\newblock Bayesian phylogenetic inference using a combinatorial sequential
  monte carlo method.
\newblock \emph{J. Am. Stat. Assoc.}, 110\penalty0 (512):\penalty0 1362--1374,
  2015.
\newblock ISSN 0162-1459.
\newblock \doi{10.1080/01621459.2015.1054487}.
\newblock URL \url{http://dx.doi.org/10.1080/01621459.2015.1054487}.

\end{thebibliography}

\newpage

\section{Appendix}

\begin{proof}[Proof of Lemma $\ref{maximum}$]
The lower bound is straightforward.
For the upper bound, consider $(x,y) \in [0, l_e] \times [0, b]$ and fix $\delta>0$; by the same arguments as in the proof of Lemma $\ref{subtree}$, we have
\[
\hat \pi_{n+1}(T(r,e, x,y)) \ge m(\delta)^S \hat \pi_n(r) \h \forall y \ge \delta.
\]
Thus, if we define
\[
A=\{(x,y) \in [0, l_e] \times [0, b]:  \hat \pi_{n+1}(T(r,e, x,y)) \ge m(\delta)^S \hat \pi_n(r)\},
\]
then we have $|A| \ge (b-\delta) l_e$ and
\begin{align*}
\mathcal{G}_n(r,e) &= \int_{x,y}{\hat \pi_{n+1}(T(r,e, x,y))~dx~dy} \\
& \ge \int_{A}{\hat \pi_{n+1}(T(r,e, x,y))~dx~dy} \ge (b-\delta) l_e ~m(\delta)^S \hat \pi_n(r).
\end{align*}
On the other hand, from Lemma $\ref{subtree}$, we have $f_n(r,e) \le b \, l_e ~\hat \pi_n(r) M(b)^S$.

By choosing $\delta=b/2$, we obtain
\[
f_n(r,e) \le 2 \frac{M(b)^S}{m(b/2)^S} ~ \mathcal{G}_n(r,e)
\]
which completes the proof.
\end{proof}

\begin{proof}[Proof of Lemma $\ref{induction}$]
(1). Assume that $\bar\pi_{n,L}$ converges to $\bar\pi_{n}$.
\[
 |\bar\alpha_{n,K}(\phi)-  \bar\pi_n(\phi)| \le  \left| \frac{1}{K}\sum_{i=1}^{L}{ K_{n+1,i} ~ \phi({p^{n}_i})} - \frac{1}{\|w\|} \sum_{i=1}^{L}{ w_i~ \phi({p^{n}_i})} \right| +  |\bar\pi_{n,L}(\phi)-  \bar\pi_n(\phi)|.
\]
By the strong law of large numbers,
 \[
 \limsup_{K \to \infty} |\bar\alpha_{n,K}(\phi)-  \bar\pi_n(\phi)| \le  |\bar\pi_{n,L}(\phi)-  \bar\pi_n(\phi)|
 \]
This implies that when $K, L \to \infty$, we have $ \bar\alpha_{n,K}(\phi) \to  \bar\pi_n(\phi)$.

(2). The rationale behind the use of MCMC moves is based on the observation that if the unweighted particles are distributed according to $\bar{\pi}_{n}$, then when we apply a Markov transition kernel $P$ of invariant distribution $\bar{\pi}_{n}$ to any particle, the new particles are still distributed according to the posterior distribution of interest.

Formally, if $\bar{\alpha}_{n,K}(\phi) \to \pi_{n}(\phi)$ for every integrable test function $\phi: \mathcal{T}_n \to \mathbb{R}$, by choosing $\phi=P^r(\cdot,A)$ for any measurable set $A \subset \mathcal{T}_{n+1}$, we deduce that
\[
\bar{\beta}_{n+1,K}(A)=\sum_{i=1}^K{P^r(s,A)~ \bar{\alpha}_{n,K}(s^{n}_i)}
\xrightarrow{K \to \infty} \int_{\mathcal{T}_{n}}{P^r(s,A) ~\bar{\pi}_{n}(s)~ds}
 = \bar{\pi}_{n}(A)
\]
since the Markov kernel $P$ is invariant with respect to $\hat \pi_n$.
Therefore, for any  measurable function $\phi : \mathcal{T}_{n} \to \mathbb{R}$, we have that $\bar \beta_{n,K}(\phi)$ converges to $\bar\pi_{n}(\phi)$.

(3). Since $\bar{\beta}_{n,K}(\phi) \to \pi_{n}(\phi)$ for every integrable test function $\phi: \mathcal{T}_n \to \mathbb{R}$, by choosing $\phi=Q^n(\cdot,t)$ for all $t \in \mathcal{T}_{n+1}$, we deduce that
\begin{eqnarray*}
\bar{\lambda}_{n,K}(t)&=&\sum_{i=1}^K{Q^n(m^{n}_i,t) \bar{\beta}_{n,K}(m^{n}_i)}. \\
&\xrightarrow{K \to \infty}& \int_{\mathcal{T}_{n}}{Q^n(m,t) \bar{\pi}_{n}(m)~dm}  = \bar{\pi}_{n}(\varrho(t)) \, Q^n(\varrho(t),t) = \bar{\lambda}_{n}(t) .
\end{eqnarray*}

Moreover, for every measurable set $A \subset \mathcal{T}_{n+1}$ , we can use the same argument to prove that
\begin{eqnarray*}
\bar{\lambda}_{n,K}(A)&=&\sum_{i=1}^K{Q^n(m^{n}_i,A) \bar{\beta}_{n,K}(m^{n}_i)} \\
&\xrightarrow{K \to \infty}& \int_{\mathcal{T}_{n}}{Q^n(m,A) \bar{\pi}_{n}(dm)}  = \int_{A}{\bar{\pi}_{n}(\varrho(t))Q^n(\varrho(t),t) \, dt} = \bar\lambda_{n}(A).
\end{eqnarray*}
Therefore, for any  measurable function $\phi : \mathcal{T}_{n+1} \to \mathbb{R}$, we also have $\bar \lambda_{n,K}(\phi)$ converges to $\bar\lambda_{n}(\phi)$.

(4) We note that
\begin{eqnarray}
h(t)\bar{\lambda}_{n}(t) &=& h(t) \, \bar{\pi}_{n}(\varrho(t)) \, Q^n(\varrho(t),t) \nonumber\\
&=& \frac{{\hat\pi}_{n+1}(t)} {{\hat\pi}_n(\varrho(t)) \, Q^n(\varrho(t),t)} \, \frac{1}{\|\hat\pi_n\|} \, {\hat\pi}_n(\varrho(t)) \, Q^n(\varrho(t),t) \nonumber \\
&=& \frac{1}{\|\hat\pi_n\|} {\hat\pi}_{n+1}(t).
\label{e1}
\end{eqnarray}

(5). Since the proposal $Q^n$ and the Markov kernel $P$ are assumed to be normalized, we have $\|\hat\lambda_{n,K}\|= \|\hat \beta_{n,K}\|=\|\hat \alpha_{n,K}\| = K$.

We have:
\begin{align}
{\frac{1}{K}\|\hat{\pi}_{n+1,K}\|} =& \frac{1}{\|\hat \lambda_{n,K}\|}\sum_{i=1}^{K}{\hat{\pi}_{n+1,K}(t^{n+1}_i)} =  \sum_{i=1}^{K}{h(t^{n+1}_i) \bar{\lambda}_{n,K}(t^{n+1}_i)}  \label{eqess} \\
  \xrightarrow{K \to \infty}& \int_{\mathcal{T}_{n+1}} {{h(t)\bar\lambda}_{n}(t) \, dt} = \frac{1}{\|\hat\pi_n\|} \int_{\mathcal{T}_{n+1}} {{\hat\pi}_{n+1}(t) \, dt} = \frac{\|\hat \pi_{n+1}\|}{\|\hat\pi_n\|}\notag.
\end{align}
By a similar argument, we have
\begin{align*}
\bar{\pi}_{n+1,K}(\phi) &= \frac{\frac{1}{K}\sum_{i=1}^{K}{\phi(t^{n+1}_i)\hat{\pi}_{n+1,K}(t^{n+1}_i)}}{\frac{1}{K}\|\hat{\pi}_{n+1,K}\|} = \frac{\sum_{i=1}^{K}{\phi(t^{n+1}_i) h(t^{n+1}_i) \bar{\lambda}_{n,K}(t^{n+1}_i)}}{\frac{1}{K}\|\hat{\pi}_{n+1,K}\|} \\
&\xrightarrow{K \to \infty} \frac{\|\hat \pi_{n}\|}{\|\hat\pi_{n+1}\|}  \int_{\mathcal{T}_n}{\phi(t) h(t) \bar{\lambda}_{n}(t)  dt }= \frac{\|\hat \pi_{n}\|}{\|\hat\pi_{n+1}\|}  \int_{\mathcal{T}_n}{\phi(t)  \frac{{\hat\pi}_{n+1}(t)}{\|\hat\pi_n\|}dt }= \bar{\pi}_{n+1}(\phi).
\label{eqn_consistency}
\end{align*}
In other words, $\bar\pi_{n+1,K}$ converges to $\bar\pi_{n+1}$.
\end{proof}

\begin{proof}[Proof of Theorem $\ref{thm:lengthESSLimit}$]
By definition, we have
\[
h(t) = \frac{{\hat\pi}_{n+1}(t)}{{\hat\pi}_n(\varrho(t)) ~ Q^n(\varrho(t),t)}, \h \text{and} \h
w^{n+1}_i=\hat{\pi}_{n+1,K_{n+1}}(t^{n+1}_i) = h(t^{n+1}_i) .
\]
Thus,
\begin{eqnarray*}
\bar{\pi}_{n+1,K_{n+1}}(h) &=& \frac{\sum_{i=1}^{K_{n+1}}{h(t^{n+1}_i)\hat{\pi}_{n+1,K}(t^{n+1}_i)}}{\|\hat{\pi}_{n+1,K_{n+1}}\|}\\
&=& \frac{\sum_{i=1}^{K_{n+1}}{(w^{n+1}_i)^2}}{\sum_{i=1}^{K_{n+1}}{w^{n+1}_i}} = \frac{\sum_{i=1}^{K_{n+1}}{w^{n+1}_i}}{\text{ESS}_{n+1}}=\frac{\|\hat\pi_{n+1,K_{n+1}}\|}{\text{ESS}_{n+1}}.
\end{eqnarray*}
On the other hand, by applying Theorem~$\ref{consistency}$ for $\phi \equiv h$, we have
\begin{eqnarray*}
\bar{\pi}_{n+1,K_{n+1}}(h) \h  \to \h \bar{\pi}_{n+1}(h) = \frac{\|\hat \pi_{n+1}\|}{\|\hat \pi_{n}\|} \int_{t \in \mathcal{T}_{n+1}}{\frac{\bar \pi_{n+1}^2(t)}{\bar \pi_{n}(\varrho(t))~Q^n(\varrho(t),t)}~dt}
\end{eqnarray*}
which completes the proof via the convergence result $\eqref{eqess}$.
\end{proof}

\begin{proof}[Proof of Lemma \ref{subtree}]
Let $l_e$ be the length of the edge $e$ and $G(\alpha)$ be the transition matrix across an edge of length $\alpha$ and $k_u$ the observed value at site $u$ of the newly added taxon.
We follow the formulation of one-dimensional phylogenetic likelihood function as in \citep{dinh2015shape} to fix all parameters except $l_e$  and consider the likelihood of $\varrho(t)$ a function of $l_e$, we have
\[
L_{n}(\varrho(t)) =  \prod_{u=1}^S{\left(\sum_{ij}{d^{u}_{ij}G^{e}_{ij}(l_e)}\right)}
\]
where $d^{u}_{ij}$ the probability of observing $i$ and $j$ at the left and right nodes of $e$ at the site index $u$, respectively (note that in \citep{dinh2015shape} it is called $d^{u}_{ij}$).
Similarly, by representing the likelihood of the tree $t$ in terms of $x$, $y$ and $l_e$, we have
\begin{equation}
L_{n+1}(t) =  \prod_{u=1}^S{\left(\sum_{ij}{d^{u}_{ij}G^{e}_{ij}(l_e,x,y)}\right)}
=\prod_{u=1}^S{\left(\sum_{ij}{d^{u}_{ij}  \sum_{m}{G_{im}(x) G_{mj}(l_e-x) G_{mk_u}(y)}  }\right)}
\label{eq:splitLike}
\end{equation}
where the indices $i,j,m$ are looped over all possible state characters.
Since $G_{mk_u}(y) \le M(y)$ for all $m$ and $k_u$, we deduce that
\begin{equation}
\frac{\hat \pi_{n+1}(t)}{\hat \pi_n (\varrho(t))}=\frac{L_{n+1}(t)}{L_n (\varrho(t))} \le M(y)^S, \h \forall t \in \mathcal{T}_{n+1}.
\label{util1app}
\end{equation}
Similarly, we have $\hat \pi_{n+1}(t)/\hat \pi_n (\varrho(t)) \ge m(y)^S$ for all $t \in \mathcal{T}_{n+1}$.

Recall that $\zeta(r)$ is the average branch length of $r$.
Using the fact that for a fixed tree $r$, $\int_0^{l_e}{dx = l_e}$ and $\sum_{e}{l_e} = (2n-3) \zeta(r)$, we have
\begin{align*}
\|\hat \pi_{n+1}\| =& \int_{t \in \mathcal{T}_{n+1}}{\hat \pi_{n+1}(t) \, dt}
\ge \int_{t \in \mathcal{T}_{n+1}}{m(y)^S \hat \pi_{n}(\varrho(t)) \, dt}\\
=& \int_{r,e,x,y}{m(y)^S \hat\pi_n(r) \, dx \, dy} ~ \frac{V_n}{V_{n+1}}de \, dr\\
=& \frac{(2n-3)V_n}{V_{n+1}} \left(\int_0^{b}{m(y)^S \, dy}\right)\int_{r \in \mathcal{T}_{n}}{{\hat \pi_{n}(r)}\zeta(r) \, dr}.
\end{align*}

Noting that $V_{n+1}=(2n-3)V_n$, we obtain
\begin{equation}
\frac{\|\hat \pi_{n+1}\|}{\|\hat \pi_n\|} \ge\left(\int_0^{b}{m(y)^S~dy}\right)\mathcal{Z}_n
\label{eq-1}
\end{equation}
which implies
\[
\frac{\bar \pi_{n+1}(t)}{\bar \pi_n (\varrho(t))} = \frac{\hat \pi_{n+1}(t)}{\hat \pi_n (\varrho(t))}  \frac{\|\hat \pi_{n}\|}{\|\hat \pi_{n+1}\|} \le \frac{1}{\mathcal{Z}_n} \frac{M(y)^S}{\int_0^{b}{m(y)^S~dy}}, \h \forall t \in \mathcal{T}_{n+1}.
\]
\end{proof}

\begin{proof}[Proof of Theorem $\ref{lengthbased}$]

Since the edge $e$ is chosen from a multinomial distribution weighted by length of the edges, then given any tree $t \in \mathcal{T}_{n+1}$ obtained from the parent tree $\varrho(t)$ by choosing edge $e$, distal position $x$ and pendant length $y$, we have
\[
Q^n(\varrho(t),t) = \frac{V_{n+1}}{V_n} \frac{l_e(\varrho(t))}{l(\varrho(t))}~ p_X(x) ~ p_Y(y)
\]
where $l_e(r), l(r)$ are the length of edge $e$ and the total tree length, respectively and $V_n$ and $V_{n+1}$ are the numbers of tree topologies of $\mathcal{T}_n$ and $\mathcal{T}_{n+1}$.

We denote
\[
u_1 = \int_0^{b}{m(y)^S~dy}, \h u_2 =\int_0^{b}{ \frac{M(y)^{2S}}{p_Y(y)}~dy},
\]
and recall that
\[
\sum_{e}{\frac{1}{l_e(r)} \int_0^{l_e(r)}{\frac{1}{p^e_X(x)}~dx}} ~ \le ~ C \sum_{e}{l_e(r)} ~= ~ C l(r),
\]
where $C$ is the constant from Assumption $\ref{distal}$, and
\[
\mathcal{Z}_{n}=\int_{s \in \mathcal{T}_{n}}{{\bar\pi_{n}(r)}\zeta(r)~dr} \ge c
\]
from the assumption on the average branch length (Assumption $\ref{branchlength}$).
We have
\begin{align*}
&\int_{t \in \mathcal{T}_{n+1}}{\frac{\bar \pi_{n+1}^2(t)}{\bar \pi_{n}(\varrho(t))~Q^n(\varrho(t),t)}~dt}
= \int_{t \in \mathcal{T}_{n+1}}{\frac{\bar \pi_{n+1}^2(t)}{\bar \pi^2_{n}(\varrho(t))} \frac{1}{Q^n(\varrho(t),t)}~\bar \pi_{n}(\varrho(t)) ~dt} \\
&\le\left(\frac{V_n}{V_{n+1}}\right)^2  \frac{1}{\mathcal{Z}_n^2} ~ \int_{r,e,x,y}{ \frac{M(y)^{2S}}{u_1^2} \frac{l(r)}{l_e(r)} \frac{1}{p^e_X(x)}  \frac{1}{p_Y(y)}\bar \pi_{n}(r) ~dx~dy~de~dr}\\
&=  \left(\frac{V_n}{V_{n+1}}\right)^2 \frac{1}{\mathcal{Z}_n^2} \frac{u_2}{u_1^2} ~\int_{\mathcal{T}_n} { \left(\sum_{e}{\frac{1}{l_e(r)} \int_0^{l_e(r)}{\frac{1}{p^e_X(x)}~dx}}\right)~l(r)~\bar \pi_{n}(r) ~ dr}.
\end{align*}

By the assumption of maximum branch length $b$, we have
\begin{align*}
\int_{t \in \mathcal{T}_{n+1}}{\frac{\bar \pi_{n+1}^2(t)}{\bar \pi_{n}(\varrho(t))~Q^n(\varrho(t),t)}~dt} &\le C  (2n-3)^2 \left(\frac{V_n}{V_{n+1}}\right)^2 \frac{1}{\mathcal{Z}_n^2} \int_{\mathcal{T}_n}{\bar \pi_n(r)\zeta^2(r)~dr} \\
&\le \frac{Cb^2}{c^2}.
\end{align*}
Thus by Theorem~\ref{thm:lengthESSLimit} there exists $\alpha>0$ independent of $K_n$ and $n$ such that $\text{ESS}_{n} \ge \alpha K_{n}$.
\end{proof}

\end{document}